\newcommand{\KK}[1]{\textcolor{red}{#1}}
\newtheorem{theorem}{Theorem}
\newtheorem*{llt}{LLT}
\newtheorem{prop}{Proposition}
\newtheorem{claima}{Claim}
\newtheorem{lem}{Lemma}
\theoremstyle{definition}
\newtheorem{rem}{Remark}
\newtheorem{definition}{Definition}
\newtheorem{ex}{Example}
\newtheorem*{ex_rules}{Examples of rules}
\newtheorem{as}{Assumption}
\begin{document}

\title{The Winner-Take-All Dilemma\footnote{
	For their fruitful
discussions and useful comments, we are thankful to Pierre Boyer,  
Micael Castanheira, 
Quoc-Anh Do, 
Shuhei Kitamura,
Michel Le Breton,
Takehito Masuda,
Shintaro Miura,
Takeshi Murooka, 
Mat\'{i}as N\'{u}\~{n}ez, 
Alessandro Riboni, 
and 
the seminar participants at 
the 11th Pan Pacific Game Theory Conference, 
CREST, 
Hitotsubashi Summer Institute, 
Institut Henri Poincar\'{e}, 
Institute of Social and Economic Research at Osaka University, 
Osaka School of International Public Policy Lunch Seminar,
Paris School of Economics,
Parisian Political Economy Workshop, 
Philipps-University of Marburg,
Transatlantic Theory Workshop,
University of Montpelier,
University of Rochester,
University of Tokyo,
Waseda University
and the 2019 Autumn Meeting of the Japanese Economic Association.
Financial support by Investissements d'Avenir, ANR-11-IDEX-0003/Labex Ecodec/ANR-11-LABX-0047 and DynaMITE: Dynamic Matching and Interactions: Theory and Experiments, ANR-13-BSHS1-0010,
PHC Sakura program, project number 45153XK,
Waseda University Grant for Special Research Projects (2018K-014), 
JSPS KAKENHI Grant Numbers JP17K13706 and JP15H05728 is gratefully acknowledged.
}}
\author{
	Kazuya Kikuchi\thanks{
	{Tokyo University of Foreign Studies. E-mail: \texttt{kazuya.kikuchi68@gmail.com}}.}
	\and
	Yukio Koriyama\thanks{
	CREST, Ecole Polytechnique, Institut Polytechnique de Paris. 
	E-mail: \texttt{yukio.koriyama@polytechnique.edu}.}}  
\maketitle
\begin{abstract}
We consider collective decision making when the society consists of groups endowed with voting weights. Each group chooses an internal rule that specifies the allocation of its weight to the alternatives as a function of its members' preferences. Under fairly general conditions, we show that the winner-take-all rule is a dominant strategy, while the equilibrium is Pareto dominated, highlighting the dilemma structure between optimality for each group and for the whole society. We also develop a technique for asymptotic analysis and show Pareto dominance of the proportional rule. Our numerical computation for the US Electoral College verifies its sensibility.
	\paragraph{JEL classification:} C72, D70, D72 
	\paragraph{Keywords:} Representative democracy, winner-take-all rule, proportional rule, Prisoner's Dilemma.
\end{abstract}

\section{Introduction}

In many situations of collective decision making, including representative democracy, the society consists of distinct groups and decisions are made based on opinions aggregated within the groups.
For example, in the United States presidential election, each state allocates its electoral votes based on the statewide popular vote. Another example is legislative voting, in which each party indicates how legislators should vote based on the opinions of party members.

In such situations, the social decision depends on the rules that groups use to aggregate the opinions of their members. 
However, if groups choose their rules based on private motives, the resulting social decisions may not be desirable.
The resulting decisions may even make all groups worse off than they could have been.
This paper studies the relationship between groups' incentives and their welfare consequences.

Existing institutions use a variety of rules, many of which pertain to how to allocate the weight assigned to each group. 
On the one hand, the \textit{winner-take-all rule} devotes all the weight to the alternative preferred by the majority of its members.  
This rule has been used to allocate electoral votes in all but two states in the most recent US presidential elections.
A council of national ministers, each with a weighted vote (e.g., the Council of the European Union), is another example, provided that the ministers represent the majority interests of their country. 
Party discipline, frequently observed in legislative voting, is also an example of the winner-take-all rule used by parties.

On the other hand, the \textit{proportional rule} allocates a group's weight in proportion to the number of members who prefer the respective alternatives. 
In a wide range of parliamentary institutions at the regional, national and international levels, each group (e.g., constituency, prefecture or state) elects a set of representatives whose composition proportionally reflects the preferences of its citizens. 
Alternatively, when representatives are considered to represent parties rather than states or prefectures, the proportional rule corresponds to a party's rule that allows its representatives to vote for or against proposals according to their own preferences, provided that the composition of the party's representatives proportionally reflects the opinions of all party members.

Weight allocation rules are often exogenously given to all groups, but there are also cases where each group chooses its own rule.
For instance, in national parliaments, how the representatives are elected in the respective constituencies is stipulated by national law.
In contrast, parties often have control over how their representatives vote, by punishing those who violate the party line.
As another example, the US Constitution stipulates that it is up to each state to decide the way in which the presidential electoral votes are allocated (Article II, Section 1, Clause 2).

If groups are allowed to choose their own rules, each group may have an incentive to allocate the weight so as to increase the influence of its members' opinions on social decisions, at the expense of the influence of other groups. It is not clear whether such a group-level incentive is consistent with desirable properties of overall preference aggregation, such as Pareto efficiency. A society consisting of distinct groups thus faces a dilemma between the private incentive of each group and the overall social objectives. To address this issue, we model the choice of rules as a non-cooperative game.

In this paper, we consider a model of collective decision making where a society consists of groups endowed with voting weights. 
Each group chooses the rule for allocating its weight to the binary alternatives, and the winner is the one with the most weight. 
A \textit{rule} for a group is a function that maps members' preferences (e.g., group-wide popular vote) to an allocation of the weight to the alternatives. {Any Borel-measurable function is allowed, including the winner-take-all and proportional rules stated above.} 
A \textit{profile} is a specification of the rules for all groups. 
We study the game in which the groups independently choose their rules, so as to maximize the expected welfare of their members.

The main result of this paper is that the game is an {$n$-player} \textit{Prisoner's Dilemma} (Theorem \ref{thm:dilemma}).
The winner-take-all rule is a \textit{dominant strategy}, {i.e.}, it is an optimal strategy for each group, regardless of the rules chosen by the other groups.
However, if each group has less than half of the total weight, then the winner-take-all profile is \textit{Pareto dominated}, {i.e.}, another profile makes \textit{every} group better off.
In brief, no group has an incentive to deviate from the winner-take-all rule, but every group would be better off if all groups jointly moved to another profile.
The dilemma structure exists for any number of groups ($>$2) and with fairly little restriction on the joint distribution of preferences (Assumption \ref{as:general}).
{Members' preferences are allowed to be biased and correlated within and across groups.
For example, the model can be applied to parties with distinct but overlapping political goals, or to states with different levels of support for specific alternatives, such as blue, red or swing states in the US elections.}

The observation that the winner-take-all rule is an optimal strategy for groups is not new. 
As we will discuss in detail in Sections \ref{subsec:literature} and \ref{subsec:result}, previous studies have already pointed out such incentives for groups in various voting situations.
The theoretical prediction about group behavior is also consistent with the fact that {it has been dominantly employed by the states in the US Electoral College since the 1830s in order to allocate presidential electoral votes,} and also with the party discipline behaviors widely observed in assemblies.
Despite the various problems or limitations that have been pointed out concerning the winner-take-all rule,\footnote{
	There are multiple arguments against the winner-take-all rule.
	First, the winner of the election may be inconsistent with that of the popular vote (\cite{May1948}, \cite{Feix2004}).
	Such a discrepancy has occurred five times in the history of the US presidential elections, including recently in 2000 and 2016. Second, it may cause reduced dimensionality: (i) the parties have an incentive to concentrate campaign resources only in the battleground states, and 
	(ii) the voters' incentive to turn out or to invest in information may be weak and/or uneven across states, since the probability of each voter being pivotal is so small under the winner-take-all rule, and even smaller in non-swing states. {Although campaign resource allocation and voter turnout are important issues, they are beyond the scope of this paper.}}
it is still used prevalently.\footnote{
	A recent attempt of reform took place in 2004 in Colorado, when a ballot initiative for a state constitution amendment was raised, proposing the proportional rule. 
	The amendment failed to pass, garnering only 34.1\%  
	approval.
}


The main contribution of this paper is to establish that, under quite general circumstances, the winner-take-all profile is Pareto dominated, i.e., \textit{every} group would be better off if all groups simultaneously changed their rules. 
This point should be distinguished from the conventional wisdom that direct popular vote (i.e., majority voting by all individuals) maximizes the \textit{utilitarian} welfare of the society, as it maintains the possibility that \textit{some} groups (e.g., small states) may benefit from the winner-take-all profile.
We provide a counterexample in Example \ref{ex:3groups}: a small group is strictly better off under the winner-take-all profile compared with both the direct popular vote and the proportional profile.
Indeed, protecting the interests of minority states is an oft-used argument by advocates of the Electoral College and its adherence to the winner-take-all rule. 
The welfare criterion used in Theorem \ref{thm:dilemma} is Pareto dominance, which is obviously stronger than the utilitarian welfare evaluation: there exists a profile under which \textit{every} group is better off than the winner-take-all profile. 
Example \ref{ex:3groups} shows that the dominant profile is not necessarily the proportional one nor the direct popular vote. If such is the case, what profile Pareto dominates the winner-take-all profile? A full characterization of the Pareto set is provided in Lemma \ref{lem:char}.

To further investigate welfare properties, we turn to an asymptotic and normative analysis of the model.
We consider situations where the number of groups is sufficiently large, and the preferences are independent across groups and distributed symmetrically with respect to the alternatives.
Under these conditions, we show that the proportional profile \textit{Pareto dominates} every other symmetric profile ({i.e.}, one in which all groups use the same rule), including the winner-take-all profile.
The assumptions on the preference distribution abstract from the fact that some groups may prefer specific alternatives.
Such an abstraction would be reasonable on the grounds that normative judgment about rules should not favor particular groups because of their characteristic preference biases.
To see how many groups are typically sufficient for the asymptotic result, we {provide} numerical computations in a model based on the US Electoral College, using the current apportionment of electoral votes.
The numerical comparisons indicate that the proportional profile does Pareto dominate the winner-take-all profile in the model with fifty states and one federal district.

While the above result suggests that the proportional profile asymptotically performs well in terms of efficiency, it is silent about the equality of individuals' welfare.
We apply our model to study how rules affect the distribution of welfare, by examining an asymmetric profile called the \textit{congressional district profile}.
This profile is inspired by the Congressional District Method currently used by Maine and Nebraska, in which two electoral votes are allocated by the winner-take-all rule, and the remainder are awarded to the winner of the popular vote in each district.\footnote{
	The idea of allocating a portion of the votes by the winner-take-all rule and allowing the rest to be awarded to distinct candidates can be seen as a compromise between the winner-take-all and the proportional rules. Symbolically, the two votes allocated by the winner-take-all rule correspond to the number of the Senators from each state, while the remainder is equal to the number of the House representatives. The idea behind such a mixture is in line with the logic supporting bicameralism, which is supposed to provide checks and balances between the state autonomy and federal governance. 
}
We show that the congressional district profile achieves a more equal distribution of welfare than any symmetric profile by making individuals in
smaller groups better off.

A technical contribution of this paper is to develop an asymptotic method for analyzing the expected welfare of players in weighted voting games. One of the major challenges in the analysis of these games is their discreteness. Due to the nature of combinatorial problems, obtaining an analytical result often requires a large number of classifications by cases, which may include prohibitively tedious and complex tasks in order to obtain general insights. We overcome this difficulty by considering asymptotic properties of games in which there are a sufficiently large number of groups.
This technique allows us to obtain an explicit formula that captures the asymptotic behavior of the payoffs, which is valid for a wide class of weight distributions among groups (the correlation lemma: Lemma \ref{lem:pareto}).

\subsection{Literature Review} \label{subsec:literature}

The incentives for groups to use the winner-take-all rule have been studied in several papers.
\citet{Hummel2011} and \citet{BeisbartBovens2008} analyze models of the US presidential elections.
\citet{Gelman2003} and \citet{Eguia2011GEB, Eguia2011AJPS} provide theoretical explanations as to why voters in an assembly form parties or voting blocs to coordinate their votes.
Their findings are coherent with our observation that the winner-take-all rule is a dominant strategy.
In particular, \citet{BeisbartBovens2008} and \citet{Gelman2003} compare the winner-take-all and proportional profiles.
In the context of the current apportionment of electoral votes in the US, \citet{BeisbartBovens2008} numerically compares these profiles, in terms of inequality indices on citizens' voting power and the mean majority deficit, on the basis of \textit{a priori} and \textit{a posteriori} voting power measures.
\citet{Gelman2003} compares the case with coalitions of equal sizes in which voters coordinate their votes to the case without such coordination.
Our analysis is based on Pareto dominance between profiles, and provides results which hold under a general distribution of weights or group sizes.
In that sense, the positive analysis of \citet{BeisbartBovens2008} is complementary to our normative analysis of the properties of the proportional profile.

We take groups as given. In the context of voting in an assembly, this means that we focus on the situation after the formation of parties or voting blocs, which is complementary to \citet{Eguia2011GEB, Eguia2011AJPS} that study the endogenous formation of such groups.

\citet{DeMouzon2019} provides a welfare analysis of popular vote interstate compacts, and shows that, for a regional compact, the welfare of member states is single-peaked as a function of the number of participating states, while it is monotonically decreasing for non-member states. The second effect dominates in terms of social welfare, unless a large majority (approximately more than $2/\pi \simeq 64\%$) of the states join the compact, implying that a small- or medium-sized regional compact is welfare detrimental. For a national compact, the total welfare is increasing, as it turns out that even non-members would mostly benefit from the compact, implying that the social optimum is attained when a majority joins the compact, {i.e.}, the winner is determined by the national popular vote. Their findings are coherent with ours: if the winner-take-all rule is applied only to a subset of the groups, then the member states enjoy the benefit at the expense of the welfare loss of the non-member states, and the total welfare decreases.
The social optimum is attained when the entire nation uses the popular vote.
The possibility of the national popular vote as a coordination device is discussed also in \citet{CloleryKoriyama2020}.

The history, objectives, problems, and reforms of the US Electoral College are summarized, for example, in \citet{Edwards2004}, \citet{BughEd2010} and \citet{Wegman2020}.
One of the most commonly discussed problems of the Electoral College is its reduced dimensionality. 
The incentive for the candidates to concentrate their campaign resources in swing and decisive states is modeled in \citet{Stromberg2008}, which is coherent with the findings of the seminal paper in probabilistic voting by \citet{LindbeckWeibull1987}. \citet{Stromberg2008} also finds that uneven resource allocation and unfavorable treatment of minority states would be mitigated by implementing a national popular vote, consistent with the classical findings by \citet{BramsDavis1974}. 
The incentive of voters to turn out is investigated by \citet{Kartal2015EJ}, which finds that the winner-take-all rule discourages turnout when the voting cost is heterogeneous.

Constitutional design of weighted voting is studied extensively in the literature. Seminal contributions are found in the context of power measurement: \citet{Penrose1946}, \citet{ShapleyShubik1954}, \citet{Banzhaf1968} and \citet{Rae1969}. Excellent summaries of theory and applications of power measurement are given by, above all, \citet{FelsenthalMachover1998} and \citet{LaruelleValenciano2008}. The tools and insights obtained in the power measurement literature 
are often used in the apportionment problem: e.g., \citet{BarberaJackson2006}, \citet{Koriyamaetal2013}, and \citet{Kurz2017}. 
Our analysis can be interpreted in the context of Bayesian mechanism design, by considering the groups in our model as agents whose preference intensity is private information. In this interpretation, Theorem \ref{thm:dilemma} translates into an impossibility theorem which states that there is no social choice function that is Bayesian incentive compatible, Pareto efficient and non-dictatorial. 
This is consistent with the results obtained in previous papers on Bayesian mechanism design, such as \citet{BORGERS20092057}, \citet{AzrieliKim2014} and \citet{EHLERS202031}.
The precise statement of the impossibility theorem (Proposition \ref{prop:mech}) and a discussion of the mechanism design literature will appear in Subsection \ref{sec:imp_mech}.

\section{The Model}
\label{sec:themodel}

We consider collective decision making
when 
a society consists of 
groups
endowed with voting weights.
We first describe
the weighted voting
mechanism
(Section \ref{sec:indirectvoting}).
We then
construct a
non-cooperative
game in which
each group chooses an internal rule
that specifies
the allocation
of its weight to the alternatives as a function of its members' preferences (Section \ref{sec:gamma}).
Finally,
we introduce
social choice functions 
which include
the weighted voting
mechanism as a special case  (Section \ref{subsec:scf}).

\subsection{Weighted Voting}
\label{sec:indirectvoting}

Let us begin with the description
of the social decision process.
We consider a society
partitioned into
$n$ disjoint groups:
$i\in\{1,2,\cdots,n\}$.
Each group $i$
is endowed with a voting 
weight $w_i>0$.

The society makes a decision
between two
alternatives,
denoted
$-1$ and $+1$,
through the following
two voting stages:
(i) each individual
votes for his
preferred alternative;
(ii) each group allocates
its weight between the
alternatives, based on
the group-wide voting result.
The winner is the alternative
that receives the majority
of overall weight.

Let
$\theta_i\in[-1,1]$
denote the
vote margin in group $i$
at the first voting stage.
That is,
$\theta_i$
is the fraction
of members of $i$
preferring alternative
$+1$
minus the fraction
preferring $-1$.\footnote{For example, $\theta_i=0.2$
means that
$60\%$ of members
of $i$ prefer $+1$
and $40\%$ prefer $-1$.}

At the second stage,
each group's allocation of weight is determined as a function of the group-wide margin.
\begin{definition}
	A \textit{rule}
	for group $i$
	is defined as a Borel-measurable\footnote{Borel-measurability
		is needed to ensure that
		$\phi_i(\theta_i)$
		is a well-defined random variable, when $\theta_i$ is a random variable.} function:
	\[
	\phi_i:[-1,1]\to[-1,1].
	\]
	\label{def:rule}
\end{definition}
The value $\phi_i(\theta_i)$
is the group-wide weight margin,
{i.e.}, the
fraction
of the weight $w_i$
allocated to
alternative $+1$
minus that allocated to $-1$,
given that the
vote margin is $\theta_i$.
That is,
the rule allocates
$w_i\phi_i(\theta_i)$
more weight to
alternative
$+1$
than alternative 
$-1$.\footnote{For example, if $w_i=50$ and $\phi_i(\theta_i)=0.2$, it means that the rule allocates $30$ (resp. $20$) units of weight to the alternative $+1$ (resp. $-1$) so that the weight margin in favor of the alternative $+1$ is $50 \times 0.2 = 30- 20$.}

Let 
$$
\Phi = \left\{ \phi_i | \text{Borel-measurable} \right\}
$$
be the set of all admissible rules.
\begin{ex_rules}
	Among all admissible rules, the following examples deserve particular attention. 
	\begin{itemize}
		\item[(i)]
		\textit{Winner-take-all rule}:
		$\phi_i^{\rm WTA}(\theta_i)=
		\text{sgn}\,\theta_i$.
		\item[(ii)]
		\textit{Proportional
			rule}:
		$\phi_i^{\rm PR}(\theta_i)=\theta_i$.
		\item[(iii)]
		\textit{Mixed rules}:
		$\phi_i^a(\theta_i)=a \phi^{\rm WTA}_i(\theta_i)+(1-a)\phi^{\rm PR}_i(\theta_i)$, $0\leq a\leq1$.
	\end{itemize}
	The winner-take-all rule
	devotes
	all the weight of a group
	to the winning alternative in the group.
	The proportional rule
	allocates
	the weight in proportion
	to the vote shares of the respective alternatives
	in the group.
	The mixed
	rule $\phi^a$
	allocates
	the fixed ratio
	$a$
	of the weight
	by the winner-take-all rule
	and the remaining
	$1-a$ part
	by the proportional rule.

\end{ex_rules}

The social decision 
is the alternative that receives
the majority of overall weight.
In the case of a tie,
we assume that
each alternative is chosen
with probability $\frac{1}{2}$.
Thus,
given the rules
$\phi=(\phi_i)_{i=1}^n$
and the group-wide
vote margins
$\theta=
(\theta_i)_{i=1}^n$,
the social decision $d_\phi(\theta)$
is determined as follows:
\begin{equation}
d_\phi(\theta)
=\begin{cases}
\text{sgn }\sum_{i=1}^nw_i\phi_i(
\theta_i)
&\text{if $\sum_{i=1}^n
	w_i\phi_i(\theta_i)\neq0$}, \\
\pm1
\text{ equally likely }
&
\text{if $\sum_{i=1}^n
	w_i\phi_i(\theta_i)=0$}.
\end{cases}
\label{eq:decision}
\end{equation}

\subsection{The Game}
\label{sec:gamma}

We now define the non-cooperative game
$\Gamma$
in which the $n$ groups
choose their own rules simultaneously.

The game
is played
under incomplete
information
about
individuals' preferences,
and hence about
the group-wide
vote margins.
Each 
group chooses
a rule so as
to maximize
the expected
welfare of its members.
Since rules are fixed
prior to realization
of the preferences,
a pure strategy of the game is a function from the realization of members' preferences to the allocation of the weight.  

Let $\Theta_i$
be
a random variable
that takes values
in $[-1,1]$
and represents
the vote margin
in group $i$.\footnote{Throughout the paper, we use capital $\Theta_i$ for the representation of a random variable, and small $\theta_i$ for the realization.}
We impose little
restriction
on the joint
distribution
of the
random vector
$\Theta=(\Theta_i)_{i=1}^n$.
The precise
assumption
on the 
distribution
will be stated
later in this section (Assumption
\ref{as:general}).

The ex post payoff for group $i$ is the
average payoff for its members from the social decision.
For simplicity,
we assume that each individual obtains
payoff 1 if he prefers the social decision
and payoff $-1$ otherwise.\footnote{
This assumption is introduced to facilitate the interpretation of $\theta_i$ as a vote margin. 
But even when the ex post payoffs admit heterogeneous preference intensities, the limitation imposed by the assumption is not essential, because there exists an affine transformation between payoffs with and without the assumption, rendering the strategic incentive equivalent, as we show in Remark \ref{rem:cardinal}.}
The average payoff of members of group $i$
equals $\Theta_i$ or $-\Theta_i$
depending on whether the social decision
is $+1$ or $-1$;
more concisely,
it is:
\[
\Theta_id_\phi(\Theta).
\]

The ex ante payoff 
for group $i$,
denoted $\pi_i(\phi)$,
is
the expected value of the  above expression:
\begin{equation}
\pi_i(\phi)=\mathbb{E}\left[\Theta_id_\phi(\Theta)\right].
\label{eq:payoff}
\end{equation}

Let $\pi_i(x_i,\phi_{-i}|\theta_i)$
denote the interim payoff
for group $i$
if it chooses the weight margin
$x_i\in[-1,1]$
given the realization
of the vote margin $\theta_i$.
It is obtained
as the weighted average of
the ex post payoffs $\theta_i$
and $-\theta_i$
from decisions $+1$
and $-1$ with the conditional probabilities:\footnote{The term corresponding
	to the event of a tie ({i.e.},
	$w_ix_i+\sum_{j\neq i}w_j\phi_j(\Theta_j)=0$)
	does not appear in the formula below,
	since we assume that the tie is broken
	fairly.}
\begin{equation}
\begin{split}
&\pi_i(x_i,\phi_{-i}|\theta_i)\\
&=\theta_i
\mathbb{P}\left\{w_ix_i
+\textstyle\sum_{j\neq i}w_j\phi_j(\Theta_j)>0\Big|\Theta_i=\theta_i\right\}\\
&\hspace{0.5cm}-\theta_i
\mathbb{P}\left\{w_ix_i+\textstyle\sum_{j\neq i}w_j\phi_j(\Theta_j)<0\Big|\Theta_i=\theta_i\right\}.
\end{split}
\label{eq:conditional_welfare}
\end{equation}
The ex ante
and interim payoffs
are thus related as follows:
\begin{equation*}
\begin{split}
&\text{$\phi_i$
	maximizes
	$\pi_i(\phi_i,\phi_{-i})$}\\
&\text{$\Leftrightarrow$
	$x_i=\phi_i(\theta_i)$
	maximizes
	$\pi_i(x_i,\phi_{-i}|\theta_i)$
	for almost every
	$\theta_i\in[-1,1]$.}
\end{split}
\end{equation*}

\bigskip

To summarize,
the
\textit{game $\Gamma$}
is the one in which:
the players are the $n$ groups;
the strategy set
for each group $i$
is the set $\Phi$ of all rules;
the payoff function
for group $i$ is $\pi_i$
defined in (\ref{eq:payoff}).

\bigskip

The following is the assumption on the
joint distribution of the group-wide margins.

\begin{as}
	The joint distribution
	of group-wide margins $(\Theta_i)_{i=1}^n$
	is absolutely continuous
	and has full support $[-1,1]^n$.
	\label{as:general}
\end{as}

Assumption \ref{as:general}
permits a wide variety
of joint distributions
of individuals' preferences,
in which
intra- and inter-group correlations
and biases are possible.
First, the assumption
imposes no restriction
on preference correlations within
each group.
Second, individuals' preferences may also be correlated across 
groups,
since the group-wide margins $(\Theta_i)_{i=1}^n$
can be correlated.
This allows
us to capture situations
where, for instance,
residents
of different states
or members of
different parties
have common
interest
on some issues.
Third,
preferences
may be
biased toward a particular alternative,
since
$\Theta_i$
can be asymmetrically distributed.
For instance, 
blue (resp. red) states in the US
might be described
as groups
whose group-wide margins
have a distribution
biased to the left (resp. right).
In contrast,
swing states
might be described as
groups whose distributions
are concentrated
around zero.

\begin{rem} \label{rem:latent}
	{\textit{Success probability
			and
			voting power.}}
	Our definition of group payoffs has the following interpretation based on
	the members' preferences. Let $M_{i}$ be the set of individuals in group $i$%
	, and  $X_{im}\in \left\{ -1, +1\right\} $ be the preferred alternative of
	member $m\in M_{i}$ in group $i$. 
	{Let us here redefine
		$\Theta_i$ as a \textit{latent variable}
		that parametrizes the distribution
		of the random preferences in group $i$. Specifically,}
	suppose $X_{im}$ are independently and
	identically distributed conditional on the realization $\left( \theta
	_{i}\right) _{i=1}^{n}$ with the following probabilities for all $i=1,\cdots
	,n$ and $m\in M_{i}$:%
	\begin{equation}
	\left\{ 
	\begin{array}{c}
	\mathbb{P}\left\{ X_{im}=+1|\Theta _{1}=\theta _{1},\cdots ,\Theta
	_{n}=\theta _{n}\right\} =\left( 1+\theta _{i}\right) /2, \\ 
	\mathbb{P}\left\{ X_{im}=-1|\Theta _{1}=\theta _{1},\cdots ,\Theta
	_{n}=\theta _{n}\right\} =\left( 1-\theta _{i}\right) /2.%
	\end{array}%
	\right.   \label{eq:prob_Xim}
	\end{equation}%
	{Then, as the group size
		becomes large
		($|M_i|\to\infty$),
		the Law of Large Numbers implies
		that the group-wide margin
		$\frac{1}{M_i}\sum_{m\in M_i}
		X_{im}$
		indeed converges to $\Theta_i$
		almost surely,
		which is consistent
		with our original definition of
		$\Theta_i$
		as the group-wide margin. Moreover,}
	\begin{equation*}
	\begin{split}
		&\mathbb{P}\left\{ X_{im}=d_{\phi }(\Theta)\right\}\\
		  &=\mathbb{E}\left[ \mathbb{P}%
		\left\{ X_{im}=d_{\phi }(\Theta)|\Theta \right\} \right]  \\
		&=\mathbb{E}\left[ \mathbb{P}\left\{ X_{im}=1,d_{\phi }(\Theta)=1|\Theta \right\} +%
		\mathbb{P}\left\{ X_{im}=-1,d_{\phi }(\Theta)=-1|\Theta \right\} \right]  \\
		&=\mathbb{E}\left[ \mathbb{P}\left\{ d_{\phi }(\Theta)=1|\Theta \right\} \frac{%
			1+\Theta _{i}}{2}+\mathbb{P}\left\{ d_{\phi }(\Theta)=-1|\Theta \right\} \frac{%
			1-\Theta _{i}}{2}\right]  \\
		&=\frac{1}{2}\left( 1+\mathbb{E}\left[ \mathbb{P}\left\{ d_{\phi }(\Theta)=1|\Theta
		\right\} \Theta _{i}+\mathbb{P}\left\{ d_{\phi }(\Theta)=-1|\Theta \right\} \left(
		-\Theta _{i}\right) \right] \right)  \\
		&=\frac{1}{2}\left( 1+\mathbb{E}\left[ \Theta _{i}d_{\phi }(\Theta)\right] \right) .
		\end{split}
	\end{equation*}%
	Therefore, $\pi _{i}\left( \phi \right) =\mathbb{E}\left[ \Theta _{i}d_{\phi }(\Theta)\right] $\ is an affine transformation of the probability that the
	preferred alternative of a member $m$ in group $i$ coincides with the social
	decision ($X_{im}=d_{\phi }(\Theta)$), which is called {\textit{success}} in the literature of voting power measurement (\cite{LaruelleValenciano2008}). The objective of the group, formulated as the maximization of $\pi_{i}$, is thus equivalent to maximization of the probability of success.
	
	Under the winner-take-all profile $\phi^{\rm WTA}$, $\pi _{i}$ is closely related to the classical voting power indices studied in the literature. 
	If $\left( \Theta
	_{i}\right) _{i=1}^{n}$ are independently, identically and symmetrically
	distributed 
	(thus each group's preferred alternative is independently and equally distributed over $\left\{ -1,+1\right\} $, called \textit{Impartial Culture}), 
	then 
	$\pi_{i}$ corresponds to the Banzhaf-Penrose index (\cite{Banzhaf1965}, \cite{Penrose1946})
	and $\mathbb{P}\left\{X_{im}=d_{\phi }(\Theta)\right\} $ to the Rae index (\cite{Rae1969}), up to a multiplication by the constant $\mathbb{E}\left[ \left| \Theta_i \right| \right]$. 
	If $\left( \Theta _{i}\right) _{i=1}^{n}$ are perfectly correlated and
	symmetrically distributed (called \textit{Impartial Anonymous Culture}; see, for
	example, \cite{lebreton2016JMathE}), then $\mathbb{\pi }_{i}$ 
	corresponds to
	the Shapley-Shubik index (\cite{ShapleyShubik1954}).\qed
	\label{rem:voting_power}
\end{rem}

\begin{rem} \label{rem:cardinal}
	\textit{
Heterogeneous preference intensities.
	}
	We have
	assumed that 
	all individuals
	have the same
	preference intensities
	({i.e.},
	each individual receives a unit
	payoff whenever
	she prefers the social
	decision),
	and that 
	each
	group's
	objective
	is to maximize 
	the ex ante average
	 payoff
	of its members.
	However,
	our formal definition
	(\ref{eq:payoff})
	can be generalized to heterogeneous preference intensities.
	It
	only suffices for 
	the 
	group-wide payoff
	from the social decision
	to be more generally  defined,
	not necessarily as 
	the average
	of members' payoffs with identical preference intensities.

	To be more
	precise,
	suppose each group $i$
	receives a random
	payoff
	$U_i^+$
	or $U_i^-$
	depending on
	whether the
	social decision
	is
	$+1$ or $-1$,
	where $U_i^+$ and $U_i^-$
	are assumed to
	take \textit{any} values in $[0,1]$.
	Redefine
	the variable
	$\Theta_i$
	as
	the payoff difference:
	$\Theta_i:=
	U_i^+-
	U_i^-$.
	Then 
	the group's
	ex ante
	payoff
	from the social decision
	under profile $\phi$
	is
	\begin{equation*}
	\begin{split}
	u_i(\phi)&=
	\mathbb{E}
	\left[
	U_i^+
	\frac{1+d_\phi(\Theta)}{2}
	+
	U_i^-
	\frac{1-d_\phi(\Theta)}{2}
	\right]\\
	&=\frac{1}{2}\mathbb{E}
	\left[
	\Theta_id_\phi(\Theta)
	\right]
	+\frac{1}{2}
	\mathbb{E}\left[
	U_i^++U_i^-
	\right]\\
	&=\frac{1}{2}
	\pi_i(\phi)+
	\text{constant}.
	\end{split}
	\end{equation*}
	Since this
	is 
	a positive
	affine transformation
	of 
	$\pi_i(\phi)$,
	our model
	captures the general case
	where each group
	maximizes the expected
	group-wide payoff
	$u_i$.
	In particular,
	the group-wide payoffs
	$U_i^+$ and $U_i^-$ can
	be
	any functions
	of members' payoffs
including heterogeneous
	preference intensities.

Furthermore, by considering groups in our model as agents whose preference intensities are private information, we can consider our model in the context of more general and abstract Bayesian mechanism design problems.
The $n$-agent setting will be useful in Section \ref{sec:imp_mech} 
where we clarify the underlying logic behind the results we obtain in Section \ref{subsec:result}. 
For that purpose, we introduce a formal definition of the social choice function in the following Subsection \ref{subsec:scf}.
\qed
\end{rem}

\subsection{Social Choice Functions}
\label{subsec:scf}


Let $\triangle\left( \{-1,+1\} \right)$ be the set of all random variables taking values in $\{-1,+1\}$.
A \textit{social choice function (SCF)}
is a Borel-measurable function\footnote{More precisely, an SCF is a function
	$d(\theta,\omega)$
	of two variables, $\theta\in[-1,1]^n$ and
	$\omega\in\Omega$
	for a sample space $\Omega$,
	such that: for each
	$\theta$,
	$d(\theta,\cdot):\Omega\to\{-1,+1\}$
	is a random variable;
	for each $\omega$,
	$d(\cdot,\omega):[-1,1]^n\to\{-1,+1\}$ is
	a Borel-measurable function.}
\[
d:[-1,1]^n \to \triangle\left( \{-1,+1\} \right) .
\]
The SCF assigns to
each profile of realized vote margins
$\theta=(\theta_i)_{i=1}^n\in[-1,1]^n$
a social decision $d(\theta)$
which may randomize between alternatives $-1$ and $+1$.
The  decision function $d_\phi$
in game $\Gamma$
is 
an example of an SCF.\footnote{Randomness
of $d_\phi(\theta)$ occurs
when the weighted vote is tied.}

With a slight abuse of notation,
we denote by $\pi_i(d)$
the ex ante payoff for group $i$
under SCF $d$.
By extending formula (\ref{eq:payoff}),
we have the following expression:
\[
\pi_i(d)=\mathbb{E}\left[\Theta_id(\Theta)\right].
\]

Our main analysis in Section \ref{subsec:result}
is based on game $\Gamma$,
but
the results have implications
to mechanism design problems with general SCFs,
which we summarize 
in Section \ref{sec:imp_mech}.

\section{The Dilemma}

\subsection{The Main Result}
\label{subsec:result}

In game $\Gamma$,
a rule
(or strategy) $\phi_i$
for group $i$
\textit{{weakly}
	dominates}
another rule $\psi_i$
if
$\pi_i(\phi_i,\phi_{-i})\geq\pi_i(\psi_i,\phi_{-i})$
for any $\phi_{-i}$,
with strict inequality
for at least one
$\phi_{-i}$.
A rule $\phi_i$ is a
\textit{{weakly} dominant
	strategy}
for group $i$
if it weakly dominates
every rule not equivalent
to $\phi_i$
where we call two rules $\phi_i$
and $\psi_i$
\textit{equivalent}
if $\phi_i(\theta_i)=\psi_i(\theta_i)$
for almost every $\theta_i$
(with respect to Lebesgue measure
on $[-1,1]$).

A profile $\phi$
\textit{Pareto dominates}
another profile $\psi$
if $\pi_i(\phi)\geq\pi_i(\psi)$
for all $i$,
with strict inequality
for at least one $i$.
If $\phi$
is not Pareto dominated
by any profile,
it is called
\textit{Pareto efficient}.
Pareto dominance
between SCFs is
defined
in the same way,
based on the payoff functions
$\pi_i(d)$ (see Section \ref{subsec:scf}).

We first consider the case in which there is no `dictator' group that can determine the winner by putting all its weight to one alternative (Theorem \ref{thm:dilemma}). 
Later we consider the case with such a group (Proposition \ref{prop:dict}).

\begin{as}
	Each group
	has less than half the total weight:
	$w_i<\frac{1}{2}\sum_{j=1}^n w_j$ for all $i=1,\cdots,n$.
	\label{as:no_dictator}
\end{as}

\begin{theorem}
	Under Assumptions
	\ref{as:general}
	and \ref{as:no_dictator},
	game $\Gamma$
	is a Prisoner's Dilemma: 
	\begin{itemize}
		\item[(i)]
		the winner-take-all rule $\phi_i^{\rm WTA}$
		is the weakly
		dominant strategy\footnote{By the definition of weak dominance,
		$\phi_i^{\rm WTA}$
		is the unique
		weakly dominant
	strategy up to equivalence of rules.}
		for each group $i$;
		\item[(ii)]
		the winner-take-all profile
		$\phi^{\rm WTA}
		$
		is Pareto dominated.
	\end{itemize}
	\label{thm:dilemma}
\end{theorem}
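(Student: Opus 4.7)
The plan is to prove (i) by a pointwise interim-optimization argument and (ii) by showing that the SCF induced by $\phi^{\rm WTA}$ cannot be a weighted-sum maximizer over SCFs, so a standard scalarization/convexity argument forces it off the Pareto frontier and yields a dominating profile.

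For (i), fix group $i$ and an opponent profile $\phi_{-i}$, and set $Y := \sum_{j\neq i} w_j \phi_j(\Theta_j)$. Then (\ref{eq:conditional_welfare}) gives $\pi_i(x_i,\phi_{-i}\mid\theta_i) = \theta_i\,G(x_i)$ with
\[
G(x) = \mathbb{P}\{Y > -w_i x\mid \Theta_i=\theta_i\} - \mathbb{P}\{Y < -w_i x \mid \Theta_i=\theta_i\}.
\]
Raising $x$ only lowers the threshold $-w_i x$, so $G$ is non-decreasing; hence $x = \text{sgn}\,\theta_i$, i.e.\ $\phi^{\rm WTA}_i$, is interim-optimal for a.e.\ $\theta_i$ against every $\phi_{-i}$, giving weak dominance. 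For strict inequality at some $\phi_{-i}$, I would take $\phi_{-i}=\phi^{\rm PR}_{-i}$: under Assumption~\ref{as:general}, $Y=\sum_{j\neq i}w_j\Theta_j$ conditional on $\Theta_i=\theta_i$ is absolutely continuous with support $[-\sum_{j\neq i}w_j,\sum_{j\neq i}w_j]\supset[-w_i,w_i]$ by Assumption~\ref{as:no_dictator}, so $G$ is \emph{strictly} increasing on $[-1,1]$. Integrating the resulting pointwise gap over the positive-measure set where $\psi_i$ disagrees with the sign function yields $\pi_i(\psi_i,\phi^{\rm PR}_{-i}) < \pi_i(\phi^{\rm WTA}_i,\phi^{\rm PR}_{-i})$ whenever $\psi_i \not\equiv \phi^{\rm WTA}_i$.

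For (ii), I parametrize a family of ``scaled-proportional'' profiles by $\lambda \in \mathbb{R}^n_{\geq 0}\setminus\{0\}$: with $c_\lambda = 1/\max_j(\lambda_j/w_j)$, set $\phi^\lambda_i(\theta_i) := c_\lambda(\lambda_i/w_i)\theta_i \in [-1,1]$, so the induced SCF is $d^\lambda(\theta) = \text{sgn}(\sum_i \lambda_i \theta_i)$. Since $\pi_i(d) = \mathbb{E}[\Theta_i d(\Theta)]$ is linear in $d$, the function $d^\lambda$ pointwise maximizes $\sum_i \lambda_i \pi_i(d) = \mathbb{E}[d(\Theta)\sum_i \lambda_i \Theta_i]$ over all SCFs, which gives $\sum_i \lambda_i \pi_i(\phi^\lambda) \geq \sum_i \lambda_i \pi_i(\phi^{\rm WTA})$, strict whenever $d^\lambda \neq d^{\rm WTA}$ on a positive-probability set. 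I would then prove this strict inequality for every admissible $\lambda$: since $d^{\rm WTA}$ depends only on $(\text{sgn}\,\theta_i)_i$, agreement would force $\text{sgn}(\sum_i \lambda_i \theta_i)$ to be constant on each sign-orthant, in particular on $O_k = \{\theta: \theta_k > 0,\ \theta_j < 0 \text{ for } j \neq k\}$, where Assumption~\ref{as:no_dictator} gives $d^{\rm WTA} \equiv -1$. But if $\lambda_k > 0$, then choosing $\theta_k$ near $1$ and the remaining $\theta_j$ near $0$ (a positive-probability configuration by Assumption~\ref{as:general}) gives $\sum_i \lambda_i \theta_i \approx \lambda_k > 0$, so $d^\lambda = +1$ on a positive-measure subset of $O_k$; running this over all $k$ forces $\lambda = 0$, a contradiction.

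The conclusion follows from a standard scalarization argument. The achievable SCF payoff set $P = \{(\pi_i(d))_i : d \text{ an SCF}\}$ is a compact convex subset of $[-1,1]^n$ (convex via randomization of SCFs), and every Pareto efficient point of $P$ must maximize some $\sum_i \lambda_i \pi_i$ with $\lambda \geq 0$, $\lambda \neq 0$. Since the previous paragraph shows $\pi^{\rm WTA} := (\pi_i(\phi^{\rm WTA}))_i$ maximizes no such functional, it is not Pareto efficient in $P$ and is therefore weakly dominated by a Pareto efficient point, which by the pointwise characterization equals $(\pi_i(\phi^{\lambda^*}))_i$ for some admissible $\lambda^*$ (concretely, $\lambda^*$ can be obtained by maximizing $\sum_i q_i$ over $\{q \in P : q \geq \pi^{\rm WTA}\}$). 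Since $\phi^{\lambda^*}$ is a valid profile, this Pareto-dominates $\phi^{\rm WTA}$. I expect the main obstacle to be the orthant case analysis: it is the only place where Assumptions~\ref{as:general} and~\ref{as:no_dictator} interact nontrivially, and the ``no-dictator'' hypothesis is needed precisely to rule out the degenerate case in which $\lambda$ concentrates on a single coordinate.
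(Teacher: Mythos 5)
Your proposal is correct and takes essentially the same route as the paper: part (i) is the same interim-monotonicity argument, with Assumption \ref{as:no_dictator} plus a full-support opponent profile delivering strictness, and part (ii) inlines exactly the content of the paper's Lemma \ref{lem:char} (the Pareto frontier of the convex SCF payoff set is attained precisely by weighted majority rules, via scalarization) together with a sign-orthant argument showing $d_{\phi^{\rm WTA}}$ is not a weighted majority rule. The only cosmetic differences are that the paper packages the scalarization step as a standalone characterization lemma (citing the standard supporting-hyperplane result rather than your explicit $q^*$ construction, which implicitly needs closedness of the payoff set) and derives the orthant contradiction from two points in a common sign-orthant rather than your single-positive-coordinate orthant $O_k$.
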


We use the following lemma
to prove the theorem.
{An SCF $d$ is called
a \textit{weighted majority rule}
if there exists a vector $(\lambda_i)_{i=1}^n
\in\mathbb{R}^n_+\setminus\{\textbf{0}\}$
such that:}
\[
d(\theta)=\text{ sgn }\sum_{i=1}^n\lambda_i\theta_i
\text{ for almost every $\theta\in[-1,1]^n$.}
\]
In game $\Gamma$,
a profile $\phi$
is called a \textit{generalized proportional
profile}
if there exists a vector
$(\lambda_i)_{i=1}^n\in
[0,1]^n\setminus\{\textbf{0}\}$
such that for each $i$,
\[
\phi_i(\theta_i)=\lambda_i\theta_i
\text{ for almost every $\theta_i\in[-1,1]$.}
\]
Two profiles $\phi$
and $\psi$
are called
\textit{equivalent}
if $d_\phi(\theta)=d_\psi(\theta)$
for almost every $\theta\in[-1,1]^n$.

\begin{lem}
	(Characterization of the Pareto
	set)
	Under Assumption
	\ref{as:general},
	the following statements hold:
	\begin{itemize}
		\item[(i)]
		An SCF $d$ is Pareto efficient
		in the set of all SCFs
		if and only
		if it is a weighted majority
		rule.
		\item[(ii)]
	In game $\Gamma$, a profile
	$\phi=(\phi_i)_{i=1}^n$ is Pareto
	efficient in the set of all profiles
	if and only if
	it is equivalent to a generalized proportional profile.
	\end{itemize}
	\label{lem:char}
\end{lem}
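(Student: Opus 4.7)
The plan is to first characterize the Pareto frontier of the unrestricted SCF payoff set (part (i)) and then deduce (ii) from (i) via the observation that every weighted majority rule is induced by a generalized proportional profile. For part (i), I identify each SCF $d$ with its conditional mean function $p(\theta)=\mathbb{E}[d(\theta)]\in[-1,1]$, so that $\pi_i(d)=\mathbb{E}[\Theta_i\, p(\Theta)]$ is linear in $p$. The payoff set $P=\{(\pi_1(d),\dots,\pi_n(d)):d\text{ an SCF}\}\subset\mathbb{R}^n$ is therefore convex, bounded, and compact (by weak-$\ast$ compactness of the $L^\infty$ unit ball). For sufficiency, let $d(\theta)=\text{sgn}\sum_i\lambda_i\theta_i$ with $\lambda\in\mathbb{R}^n_+\setminus\{\textbf{0}\}$. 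Pointwise, $d(\theta)$ maximizes $(\sum_i\lambda_i\theta_i)\,p$ over $p\in[-1,1]$, so $d$ maximizes $\sum_i\lambda_i\pi_i(\cdot)$ among all SCFs. Any Pareto dominator $d'$ would also attain this maximum and must therefore coincide with $d$ on $S=\{\theta:\sum_i\lambda_i\theta_i\neq 0\}$; since $\lambda\neq\textbf{0}$, $S^c$ is a hyperplane of zero Lebesgue measure, and Assumption~\ref{as:general} (absolute continuity) makes it probability-null, forcing $\pi_i(d')=\pi_i(d)$ for every $i$---a contradiction. For necessity, I apply the supporting-hyperplane theorem to the convex set $P-\mathbb{R}^n_+$ at the boundary point $v$. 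Since $v-r\in P-\mathbb{R}^n_+$ for every $r\geq\textbf{0}$, the normal $\lambda\neq\textbf{0}$ must satisfy $\lambda\cdot r\geq 0$, so $\lambda\in\mathbb{R}^n_+$; the SCF achieving $v$ then pointwise maximizes $(\sum_i\lambda_i\theta_i)\,p(\theta)$ and is therefore equivalent to the corresponding weighted majority rule.

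For part (ii), the bridge to profiles is the observation that any weighted majority rule $d(\theta)=\text{sgn}\sum_i\mu_i\theta_i$ is realized as $d_{\phi'}$ by the generalized proportional profile $\phi'_i(\theta_i)=\lambda_i\theta_i$ with $\lambda_i=\mu_i/(c\,w_i)$ and $c=\max_j(\mu_j/w_j)$, so that $\lambda\in[0,1]^n$ and $\text{sgn}\sum_i w_i\lambda_i\theta_i=\text{sgn}\sum_i\mu_i\theta_i$. Sufficiency is then immediate: if $\phi$ is equivalent to a generalized proportional profile, then $d_\phi$ is equivalent to a weighted majority rule, hence Pareto efficient among \emph{all} SCFs by (i), a fortiori among profile-induced ones. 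For necessity, suppose $\phi$ is Pareto efficient in $\Gamma$. If $d_\phi$ were not Pareto efficient among SCFs, let $v''$ strictly dominate $v=(\pi_i(d_\phi))_{i=1}^n$ in $P$; maximizing $\sum_i v^*_i$ over the non-empty compact set $\{v^*\in P:v^*\geq v''\}$ yields a Pareto efficient $v^*\in P$ that still dominates $v$. By (i), $v^*$ is realized by a weighted majority rule, hence by some generalized proportional profile $\phi'$, which would Pareto dominate $\phi$---a contradiction. Therefore $d_\phi$ is equivalent to a weighted majority rule, so $\phi$ is equivalent to the corresponding generalized proportional profile.

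The main obstacle is ensuring the supporting normal in the necessity half of (i) is non-negative; this requires working with the hypograph $P-\mathbb{R}^n_+$ rather than $P$ itself, so that free-disposal directions provide the sign constraint on $\lambda$. A secondary technical point is that the tie set $\{\sum_i\lambda_i\theta_i=0\}$ is Lebesgue-null as a hyperplane and, by Assumption~\ref{as:general}, also probability-null, so the tie-breaking convention and the distinction between Lebesgue-a.e.\ and a.s.\ equality of rules are immaterial for payoffs throughout the argument.
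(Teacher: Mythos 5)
Your proof is correct and follows essentially the same route as the paper's: convexify the payoff set, characterize the Pareto frontier as the maximizers of nonnegative welfare weights via the pointwise identity $\sum_i\lambda_i\pi_i(d)=\mathbb{E}\bigl[d(\Theta)\sum_i\lambda_i\Theta_i\bigr]$ (with absolute continuity killing the tie hyperplane), and then reduce part (ii) to part (i) through the correspondence between weighted majority rules and generalized proportional profiles. If anything, you are more careful than the paper in two spots --- normalizing the coefficients into $[0,1]^n$ via $c=\max_j(\mu_j/w_j)$, and using compactness of $P$ to produce a Pareto-efficient (hence profile-induced) dominator in the necessity half of (ii), a step the paper leaves implicit.
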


The proof of Lemma \ref{lem:char} is relegated to the Appendix.

\begin{proof}[Proof of Theorem \ref{thm:dilemma}]
	\textit{Part (i).}
	We first check
	that
	\begin{equation}
	\pi_i(\phi_i^{\rm WTA},\phi_{-i})
	\geq
	\pi_i(\phi_i,\phi_{-i})
	\label{eq:weak_ineq}
	\end{equation}
	for any $(\phi_i,\phi_{-i})$.
	By (\ref{eq:conditional_welfare}),
	if $\theta_i>0$ (resp. $\theta_i<0$),
	then the interim payoff
	$\pi_i(x_i,\phi_{-i}|\theta_i)$
	is non-decreasing (resp. non-increasing)
	in $x_i\in[-1,1]$.
	We thus have
	$\pi_i(\phi_i^{\rm WTA}(\theta_i),\phi_{-i}|\theta_i)
	\geq
	\pi_i(\phi_i(\theta_i),\phi_{-i}|\theta_i)$
	for any $(\phi_i,\phi_{-i})$
	and $\theta_i\neq0$.
	Since $\Theta_i=0$ occurs
	with probability 0,
	This implies (\ref{eq:weak_ineq}).

	Now we show that
	for any  profile $\phi_{-i}$
	in which each $\phi_j(\Theta_i)$
	($j\neq i$)
	has full support $[-1,1]$ (e.g.,
	$\phi_j^{\rm PR}$),
	the strict inequality
	\begin{equation}
	\pi_i(\phi_i^{\rm WTA},\phi_{-i})
	>
	\pi_i(\phi_i,\phi_{-i})
	\label{eq:strict_ineq}
	\end{equation}
	holds
	for any rule $\phi_i$
	that differs from $\phi_i^{\rm WTA}$
	on a set $A\subset[-1,1]$ of positive measure.
	To see this,
	note that for such $\phi_{-i}$
	and any
	$\theta_i$,
	the conditional distribution
	of $\sum_{j\neq i}w_j\phi_j(\Theta_j)$ given $\Theta_i=\theta_i$
	has support 
	\[
	I=\left[-\sum_{j\neq i}w_j,\sum_{j\neq i}w_j\right].
	\]
	Since 
	$w_i<\sum_{j\neq i}w_j$ by Assumption
	\ref{as:no_dictator},
	as 
	$x_i$
	moves in $[-1,1]$,
	$w_ix_i$
	moves in interval $I$.
	Formula
	(\ref{eq:conditional_welfare}) thus
	implies that
	if $\theta_i>0$ (resp. $\theta_i<0$),
	then
	$\pi_i(x_i,\phi_{-i}|\theta_i)$
	is strictly increasing (resp. decreasing)
	in $x_i\in[-1,1]$.
	Hence
	$\pi_i(\phi_i^{\rm WTA}(\theta_i),\phi_{-i}|\theta_i)>\pi_i(\phi_i(\theta_i),\phi_{-i}|\theta_i)$
	at any $\theta_i\in A$.
	Since $\Theta_i$
	has full support, result
	(\ref{eq:strict_ineq})
	follows.

	\textit{Part (ii)}.
	By the characterization
	of the Pareto set
	(Lemma \ref{lem:char}(ii)),
	it suffices to check
	that 
	$\phi^{\rm WTA}$
	is not equivalent
	to any generalized proportional profile.
	Suppose, on the contrary,
	that $\phi^{\rm WTA}$
	is equivalent
	to a generalized proportional profile 
	with coefficients
	$\lambda\in[0,1]^n\setminus\{\textbf{0}\}$.
	Then, since $(\Theta_i)_{i=1}^n$
	has full support,
	\begin{equation}
	d_{\phi^{\rm WTA}}(\theta)=\text{sgn\,}\sum_{i=1}^nw_i
	\lambda_i\theta_i\text{ at almost every }\theta\in[-1,1]^n.
	\label{eq:equiv}
	\end{equation}
	Since no group dictates
	the social decision,
	the coefficients
	$\lambda_i$ are positive
	for at least two groups.
	Without loss of generality,
	assume $\lambda_1>0$
	and $\lambda_2>0$.
	Now,
	fix $\theta_i$
	for $i\neq 1,2$
	so that they are sufficiently
	small in absolute value.
	Then, according
	to (\ref{eq:equiv}),
	for (almost any) sufficiently small $\varepsilon>0$,
	$d_{\phi^{\rm WTA}}(\theta)=+1$
	if $\theta_1=1-\varepsilon$
	and $\theta_2=-\varepsilon$,
	while
	$d_{\phi^{\rm WTA}}(\theta)=-1$
	if $\theta_1=\varepsilon$
	and $\theta_2=-1+\varepsilon$.
	This contradicts the fact
	that $d_{\phi^{\rm WTA}}(\theta)$
	depends only on
	the signs
	of $(\theta_i)_{i=1}^n$.	
\end{proof}

{
Theorem \ref{thm:dilemma}
shows that,
while 
the dominant strategy for each group
is the winner-take-all rule,
the dominant-strategy equilibrium
is Pareto dominated by a generalized
proportional profile.
This typical 
Social
Dilemma (or, $n$-player Prisoner's Dilemma) situation suggests that a Pareto efficient outcome is not expected to be achieved under decentralized decision making, and a coordination device is necessary in order to attain a Pareto improvement.}

The observation that groups
have an incentive
to use the winner-take-all rule
is not new.
\citet{BeisbartBovens2008}
consider Colorado's deviation from the winner-take-all rule to the proportional rule, following the state's attempt in 2004 to amend the state constitution, 
and show that the citizens in Colorado are worse off under both \textit{a priori} 
and \textit{a posteriori} measures. 
\citet{Hummel2011} shows that a majority of the voters in a state is worse off by unilaterally switching to the proportional rule from the winner-take-all profile. 

Our results are also consistent with the findings in the literature of the coalition formation games in which 
individuals may have incentive to raise their voices by forming a coalition and aligning their votes.
\citet{Gelman2003} illustrates that individuals are better off by forming a coalition and assign all their weights to one alternative.
\citet{Eguia2011GEB} considers a game in which the members in an assembly decide whether to accept the party discipline to align their votes, 
and shows that the voting blocs form in equilibrium if preferences are sufficiently polarized. 
\citet{Eguia2011AJPS} considers a dynamic model and shows the conditions under which voters form two polarized voting blocs in a stationary equilibrium. 

A novelty of Theorem \ref{thm:dilemma}
lies in its generality. 
Earlier studies have
introduced a specific structure
either 
on the distribution of the preferences and/or of the weights, 
or on the set of the rules that groups can use.\footnote{ 
\citet{BeisbartBovens2008} consider Colorado's strategic choice between the winner-take-all and the proportional rules in the US Electoral College.
\citet{Hummel2011} either introduces a correlation structure in the preference distribution or assumes weights to be constant in other states. 
\citet{Gelman2003} shows interesting computations, but all claims are based on observations from examples.
\citet{Eguia2011GEB} introduces a three-group preference structure, left, right and independent, and 
\citet{Eguia2011AJPS}'s main results focus on a nine-voter example, and the internal rules are assumed to be (super) majority rules.
} 
In contrast, we only impose 
fairly mild conditions on the preference distribution
(in particular, Assumption \ref{as:general} imposes no restriction on across-group correlation),
on the weight distribution (Assumption \ref{as:no_dictator} imposes no specific weight structure 
such as one big group and several smaller ones, or equally sized groups), and  
on the set of the available rules (Definition \ref{def:rule} admits all Borel-measurable rules, not just the winner-take-all and the proportional rules). 

Most importantly, the generality of our model allows for a welfare analysis 
which does not require introduction of a
specific structure on 
the weight and/or
the preference distribution, or the set of available rules.
Since our model incorporates \textit{all} Borel-measurable profiles, 
the Pareto set obtained in Lemma \ref{lem:char}
leads us to an explicit characterization 
of 
the set of \textit{first-best} outcomes
which can be attained. 

The key welfare implication of our result is that the dominant-strategy equilibrium is Pareto dominated by generalized proportional profiles. This provides us with two important insights in welfare analysis of groupwise preference aggregation problems. First, the game is a Prisoner's dilemma so that a coordination device is necessary for a Pareto improvement. Second, once such a device is available, our characterization lemma tells us that, at the first-best,
the society should use 
rules that are proportional in nature,
so that the cardinal information of the group-wide preferences is transmitted without distortion.

It is worth emphasizing that the result does not imply merely utilitarian ({i.e.}, benthamite) inefficiency of the equilibrium profile. 
The profile is Pareto dominated, implying that it is in \textit{every}
group's interest
to move from
the winner-take-all equilibrium
to another profile.
From the utilitarian perspective,
it is straightforward
to see that the social optimum
is obtained by the
\textit{popular vote},
{i.e.},
direct majority voting
by all individuals.
However,
this observation
is not sufficient
to establish that the winner-take-all profile
is Pareto dominated.\footnote{Obviously, utilitarian optimality does not imply Pareto dominance. 
}
After all, the utilitarian optimum is merely one point in the Pareto set.

The following example illustrates
that 
the winner-take-all profile
is not always Pareto dominated by 
either the popular vote or the proportional profile.


\begin{ex}
\label{ex:3groups}	
	Consider a 
	society which consists of
    two large groups 
    with an equal weight and one small group.
	For an illustrative purpose, 
	let us consider three American states: Florida, New York and Wyoming.
	Their populations and weights
	are summarized in Table \ref{tab:FLNYWY}.

\begin{table}[h]
	\caption{Comparison of the expected payoffs in an example of the society which consists of three states: Florida, New York and Wyoming. Weights are the electoral votes assigned in the Electoral College in 2020. Population is an estimation of the voting-age population in 2018 (in thousands). Source: US Census Bureau. }
	\centering
		\resizebox{\columnwidth}{!} {
	\begin{tabular}{lcccccc} \hline\hline
		State & Weight & Population & $\pi_i\left(\phi^{\rm WTA}\right)$ 
								    & $\pi_i\left(\phi^{\rm PR}\right)$ 
								    & $\pi_i\left(\phi^{\rm POP}\right)$ 
								    & $\pi_i\left(\hat{\phi}\right)$ \\ 
		\hline
		Florida   & 29 & 15,047 & 0.250 & 0.332 & 0.343 & 0.271\\
		New York  & 29 & 13,684 & 0.250 & 0.332 & 0.323 & 0.271\\
		Wyoming   & 3  & 422   & 0.250 & 0.034 & 0.008  & 0.271
		\\ \hline
		Per capita average & & & 0.250 & 0.328 & 0.329 & 0.271 \\ \hline
	\end{tabular}
	}
	\label{tab:FLNYWY}
\end{table}
	
	As defined above, $\phi^{\rm WTA}$ and $\phi^{\rm PR}$
	are the winner-take-all and proportional
	profiles.
	The vote margins $\left(\Theta_i \right)_{i=1,2,3}$ are drawn from the uniform distribution on $[-1, 1]$ independently across the states.
	The payoff of the popular vote $\pi_i\left(\phi^{\rm POP}\right)$ is defined as 
	the ex ante expected payoff of a representative voter in each state, which is obtained by letting the social decision $d$ be the popular vote winner in (\ref{eq:payoff}).

	Since there is no dictator state ({i.e.}, Assumpion \ref{as:no_dictator} is satisfied) in this example, any pair of two states is a minimal winning coalition under the winner-take-all profile, implying that the expected payoffs are exactly the same across states under $\phi^{\rm WTA}$.
	
	The two larger states are better off under the proportional profile $\phi^{\rm PR}$, while the smaller state is worse off. 
	This is because the social decision is more likely to coincide with
	the alternative preferred by the majority of the large states under $\phi^{\rm PR}$.
	As a consequence, the differences in the weights are reflected more directly on the differences in the expected payoffs. 
	
	Even though the small state is better off under $\phi^{\rm WTA}$ in this particular example, it is worth underlining that whether the winner-take-all profile favors small states as compared to the proportional profile depends on the weight distribution. 
	For example, if one state is a dictator ({i.e.} violating Assumption \ref{as:no_dictator}), the payoffs of the two other states are zero under $\phi^{\rm WTA}$, while they are (probably small but) strictly positive under $\phi^{\rm PR}$.

	Under the popular vote $\phi^{\rm POP}$, the expected payoff of the small state is even smaller than under $\phi^{\rm PR}$. This comes from the fact that the weight assigned to the small state is larger than the large states in the \textit{per capita} measure. In this example, Wyoming has more weight than it would if assigned proportionally to the population.\footnote{The digressive proportionality is a consequence of the rule specified in the US Constitution. The number of electoral votes of each state is the sum of the numbers of Senate members (constant) and of the House (proportional to population in principle). Under such a rule, per capita weight is decreasing in population.} 
	Under the popular vote, the citizens in the small state lose such an advantage assigned through the weights. 
	We can also observe that the utilitarian (benthamite) welfare is maximized under the popular vote $\phi^{\rm POP}$ by comparing the per capita average of the expected payoffs.

	Finally, let $\hat{\phi}$
	be the generalized proportional
	profile with coefficients
	$\lambda_i=1/w_i$. We observe that it 
	Pareto dominates $\phi^{\rm WTA}$.
	Remember that our characterization lemma tells us that a profile is Pareto efficient if and only if it is equivalent to a generalized proportional profile. We can show that among the profiles which Pareto dominate the equilibrium profile $\phi^{\rm WTA}$, one is obtained by letting $\lambda_i=1/w_i$, because the expected payoffs are equal across the states in this example, and we can obtain the particular point in the Pareto set with the equal Pareto coefficients by setting $\lambda_i=1/w_i$. 
	
	This example illustrates that the winner-take-all, proportional profiles, and the popular vote may be all Pareto imcomparable.
	Even though Theorem \ref{thm:dilemma} shows that the winner-take-all profile is Pareto dominated, it may not be dominated by either the proportional profile or the popular vote. 
	This may happen when the number of groups is small. 
	For the cases in which there are sufficiently many groups,
	we provide clear-cut insights in Section \ref{sec:asym} by using an asymptotic model and numerical simulations.
	\qed
\end{ex}

To summarize, we have the following propositions.

\begin{prop}
	Under Assumption \ref{as:general}, the proportional profile $\phi^{\rm PR}$ and the popular vote $\phi^{\rm POP}$ are both Pareto efficient.
\end{prop}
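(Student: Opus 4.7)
The plan is to apply Lemma \ref{lem:char} directly, since both objects turn out to be (equivalent to) generalized proportional profiles, and Lemma \ref{lem:char}(ii) characterizes these as exactly the Pareto-efficient profiles.

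For the proportional profile $\phi^{\rm PR}$, the argument is essentially by inspection: since $\phi_i^{\rm PR}(\theta_i)=\theta_i$ for every $i$, $\phi^{\rm PR}$ is itself a generalized proportional profile with coefficients $\lambda_i=1$ for all $i$, and Lemma \ref{lem:char}(ii) yields Pareto efficiency immediately.

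For the popular vote $\phi^{\rm POP}$, let $s_i>0$ denote the size of group $i$, so that the induced social choice function is $d^{\rm POP}(\theta)=\text{sgn}\sum_i s_i\theta_i$ (up to a positive common factor). This is a weighted majority rule with coefficients $(s_i)_{i=1}^n$, hence by Lemma \ref{lem:char}(i) it is Pareto efficient among all SCFs, and a fortiori among the smaller class of SCFs induced by profiles of $\Gamma$. Equivalently, one may realize $\phi^{\rm POP}$ as a generalized proportional profile by setting $\lambda_i=(s_i/w_i)/\max_j(s_j/w_j)\in(0,1]$ and $\phi_i(\theta_i)=\lambda_i\theta_i$, which gives $\text{sgn}\sum_i w_i\lambda_i\theta_i=\text{sgn}\sum_i s_i\theta_i=d^{\rm POP}(\theta)$, so Lemma \ref{lem:char}(ii) applies in this form as well.

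The only point requiring mild care is the normalization $\lambda_i\in[0,1]$ built into the definition of a generalized proportional profile; this is absorbed into the rescaling by $\max_j(s_j/w_j)$, which is innocuous because the induced social decision depends only on the sign of a weighted sum. Since Lemma \ref{lem:char} does all the substantive work, no genuine obstacle arises.
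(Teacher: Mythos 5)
Your proof is correct and follows essentially the same route as the paper: both cases are reduced to Lemma \ref{lem:char} by exhibiting the profiles as (equivalent to) generalized proportional profiles, with $\lambda_i=1$ for $\phi^{\rm PR}$ and $\lambda_i$ proportional to (group size)$/w_i$ for $\phi^{\rm POP}$. If anything, you are slightly more careful than the paper's own proof, which writes $\lambda_i=n_i/w_i$ without checking the normalization $\lambda_i\in[0,1]$ that your rescaling by $\max_j(s_j/w_j)$ handles explicitly.
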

\begin{proof}
	Trivially, the proportional profile is a generalized proportional profile by letting $\lambda_i =1$ for all $i$. The outcome of the popular vote coincides with that of the generalized proportional profile with $\lambda_i = n_i/w_i$ for all $i$.
	By Lemma \ref{lem:char}, we obtain the result.
\end{proof}

\begin{prop}
	\label{prop:dict}
	Under Assumption \ref{as:general}, the winner-take-all profile $\phi^{\rm WTA}$ is Pareto dominated if and only if Assumption \ref{as:no_dictator} is satisfied.
\end{prop}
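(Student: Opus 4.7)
The statement is a biconditional. The forward direction (Assumption \ref{as:no_dictator} implies $\phi^{\rm WTA}$ is Pareto dominated) is precisely part (ii) of Theorem \ref{thm:dilemma}, so my plan is to concentrate on the converse, which I would prove via the contrapositive: assuming a ``dictator'' group $i^{*}$ with $w_{i^{*}}>\tfrac{1}{2}\sum_{j}w_{j}$ exists, I will show that $\phi^{\rm WTA}$ is not Pareto dominated.

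The first step is to observe that $w_{i^{*}}>\sum_{j\neq i^{*}}w_{j}$ forces the weighted tally under $\phi^{\rm WTA}$, namely $w_{i^{*}}\mathrm{sgn}(\Theta_{i^{*}})+\sum_{j\neq i^{*}}w_{j}\mathrm{sgn}(\Theta_{j})$, to always carry the sign of $\Theta_{i^{*}}$, because the second summand has absolute value strictly below $w_{i^{*}}$. Hence $d_{\phi^{\rm WTA}}(\Theta)=\mathrm{sgn}(\Theta_{i^{*}})$ almost surely (the event $\{\Theta_{i^{*}}=0\}$ is negligible by Assumption \ref{as:general}), so that $\pi_{i^{*}}(\phi^{\rm WTA})=\mathbb{E}[|\Theta_{i^{*}}|]$.

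The second step is to invoke the pointwise bound $\Theta_{i^{*}}d(\Theta)\leq|\Theta_{i^{*}}|$, valid for every SCF $d$ as defined in Section \ref{subsec:scf}, which yields $\pi_{i^{*}}(d)\leq\mathbb{E}[|\Theta_{i^{*}}|]$ with equality if and only if $d(\Theta)=\mathrm{sgn}(\Theta_{i^{*}})$ almost surely on $\{\Theta_{i^{*}}\neq 0\}$. Consequently, any profile $\psi$ that weakly dominates $\phi^{\rm WTA}$ for group $i^{*}$ must satisfy $\pi_{i^{*}}(\psi)=\mathbb{E}[|\Theta_{i^{*}}|]$, forcing $d_{\psi}=\mathrm{sgn}(\Theta_{i^{*}})$ almost surely; but then for every other group $j$,
\[
\pi_{j}(\psi)=\mathbb{E}\bigl[\Theta_{j}\,\mathrm{sgn}(\Theta_{i^{*}})\bigr]=\pi_{j}(\phi^{\rm WTA}),
\]
so no group is strictly better off under $\psi$. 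This contradicts the hypothesis that $\psi$ Pareto dominates $\phi^{\rm WTA}$, completing the argument.

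The main obstacle I anticipate is the boundary subcase $w_{i^{*}}=\tfrac{1}{2}\sum_{j}w_{j}$, where $\phi^{\rm WTA}$ produces genuine ties on the positive-probability event $E=\{\mathrm{sgn}(\Theta_{j})=-\mathrm{sgn}(\Theta_{i^{*}})\ \forall j\neq i^{*}\}$. On $E$ the tally is zero and the fair tie-breaking drives $i^{*}$'s conditional contribution to zero, giving $\pi_{i^{*}}(\phi^{\rm WTA})=\mathbb{E}[|\Theta_{i^{*}}|\mathbf{1}_{E^{c}}]<\mathbb{E}[|\Theta_{i^{*}}|]$. The clean maximum-payoff shortcut then fails, and a separate verification would be needed at this boundary; interpreting ``dictator'' strictly as $w_{i^{*}}>\tfrac{1}{2}\sum_{j}w_{j}$ (which matches the strict form of Assumption \ref{as:no_dictator}) is how I would read the proposition to avoid this complication.
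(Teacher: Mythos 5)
Your argument is correct in the strict-dictator case and takes a genuinely different route from the paper for the ``only if'' direction. The paper disposes of that direction in two lines by appealing to Lemma \ref{lem:char}(ii): when a dictator exists, $\phi^{\rm WTA}$ is equivalent to the generalized proportional profile with $\lambda_{i^*}>0$ and $\lambda_j=0$ for $j\neq i^*$, hence Pareto efficient, hence not Pareto dominated. You instead argue from first principles: the dictator already attains the global maximum $\mathbb{E}[|\Theta_{i^*}|]$ of its payoff over all SCFs, so any profile that does not hurt the dictator must induce the decision $\mathrm{sgn}(\Theta_{i^*})$ almost surely and therefore leaves every group's payoff unchanged. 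Your route is more elementary --- it does not invoke the characterization of the Pareto frontier --- and it yields the slightly sharper observation that no profile can even weakly improve on $\phi^{\rm WTA}$ for the dictator without freezing all payoffs; the paper's route is shorter given that Lemma \ref{lem:char} is already in hand.

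The boundary subcase you flag, $w_{i^*}=\tfrac12\sum_j w_j$, is a real issue and not merely an obstacle to your particular method. There the negation of Assumption \ref{as:no_dictator} holds, yet $d_{\phi^{\rm WTA}}$ is a fair coin on the positive-probability event that every other group opposes $i^*$, so $\phi^{\rm WTA}$ is \emph{not} equivalent to the dictatorial generalized proportional profile; by Lemma \ref{lem:char}(ii) it is then still Pareto dominated. Concretely, with two equal-weight groups and i.i.d.\ uniform margins, the proportional profile gives each group payoff $1/3$ versus $1/4$ under $\phi^{\rm WTA}$, so the ``only if'' direction literally fails on this knife edge. The paper's own proof silently assumes a strict dictator at exactly the step where it asserts the equivalence, so your decision to read ``dictator'' strictly is the right repair --- but you should state explicitly that the biconditional as written requires the strict inequality in the negation of Assumption \ref{as:no_dictator}, rather than leaving the case open.
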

\begin{proof}
	The ``if'' part is already proven in Theorem \ref{thm:dilemma} (ii). To show the ``only if'' part, suppose that Assumption \ref{as:no_dictator} is violated. Then, there exists a dictator state $i^\ast$ that can determine the winner by putting all its weight to the alternative preferred by the majority of the state.  
	Hence, $\phi^{\rm WTA}$
	is equivalent to the generalized
	proportional profile with coefficients
	$\lambda_{i^\ast}>0$
	and $\lambda_i=0$
	for all $i\neq i^\ast$. 
	By Lemma \ref{lem:char}, we obtain the result.	
\end{proof}

\subsection{An impossibility theorem underlying the WTA Dilemma}
\label{sec:imp_mech}
In order to provide an interpretation of the result obtained in Theorem \ref{thm:dilemma} in the context of mechanism design, we state an impossibility theorem that underlies the winner-take-all dilemma. 

We show in Remark \ref{rem:cardinal} above that there is a direct analogy between the non-cooperative voting game $\Gamma$ considered in Theorem \ref{thm:dilemma} and the Bayesian collective decision problem in which  
each agent's preferences including the intensity level are private information. 
In order to elucidate the logic behind our theorem, it is thus useful to consider a model of social choice function of which the cardinal preferences are the input.

Consider a society
which consists of
 $n$
agents ($i=1,\cdots,n$)
and which makes a collective decision
between two alternatives $+1$
and $-1$.
{As we described
	in Remark \ref{rem:cardinal},
	each group is a player in the voting game $\Gamma$, while it can be seen more generally as an agent whose preference intensity is represented by a von Neumann-Morgenstern
	utility function.}
Let $U_i^+$ (resp. $U_i^-$) be
agent $i$'s utility from the alternative
$+1$ (resp. $-1$).
We assume that
the utilities
are random variables whose support is included in a bounded interval, which we suppose as $[0,1]$
without loss of generality.

Agent $i$'s \textit{type} is represented by 
the utility difference
$\Theta_i:=U_i^+-U_i^-$.
Then, $\Theta_i$
is a random variable 
taking a value in $[-1,1]$.
The type is private information:
each agent observes only his own type.
We only impose absolute continuity and full support of
the joint distribution
(Assumption \ref{as:general}).
This allows for correlations of types,
and ex ante asymmetries with respect to the agents and the alternatives.

A social choice function (SCF) $d$
is defined as in Section \ref{subsec:scf}.
For each profile of realized types
$\theta=(\theta_i)_{i=1}^n$, 
$d(\theta)$ is a 
random variable which takes a value either $+1$ or $-1$.
An SCF 
is \textit{dictatorial}
if there
exists an agent
$i$ such that $d(\theta)$ assigns probability one to the alternative $\text{sgn }\theta_i$
for almost every $\theta\in[-1,1]^n$.
Note that 
a weighted majority rule
is dictatorial if and only if 
$\lambda_i>0$
for one $i$
and $\lambda_j=0$ for all $j\neq i$.

We consider the \textit{direct mechanism}
associated with SCF $d$.
Each of $n$ agents simultaneously
reports a type,
based on which an alternative
is chosen according to $d$.
A strategy for
agent $i$ is
a Borel-measurable function
$\sigma_i:[-1,1]\to[-1,1]$
that assigns to each realization of type
$\theta_i\in[-1,1]$ a reported type $\sigma_i(\theta_i)\in[-1,1]$.
A strategy $\sigma_i$
is called \textit{truthful}
if
$\sigma_i(\theta_i)=\theta_i$
for almost every $\theta_i$.
Given
a strategy profile $\sigma=(\sigma_i)_{i=1}^n$,
the ex ante payoff for agent $i$ induced by $d$
is:
\[
\pi_i(\sigma;d)=\mathbb{E}\left[\Theta_i
d(\sigma(\Theta))\right]
\label{eq:pi_d}
\]
where $\sigma(\Theta)=\left(
\sigma_j(\Theta_j)
\right)_{j=1}^n$
is the profile of reported types.

The game $\Gamma$ defined
in Section \ref{sec:gamma}
is thus exactly the one induced by the direct mechanism 
associated with the weighted majority
rule $d$ with coefficients $\lambda_i=w_i$
($i=1,\cdots,n$).
Call group $i$ in game $\Gamma$
as agent $i$,
and its group-wide vote margin $\Theta_i$
as the agent's type.
The strategy set $\Phi$ for group $i$ in that game
is the same as the strategy set
for agent $i$ in the direct mechanism.
The definition (\ref{eq:decision}) of the social decision $d_\phi(\theta)$
in game $\Gamma$
is exactly the same as the decision $d(\phi(\theta))$
in the direct mechanism
in which the strategy profile 
$\sigma$ coincides with $\phi$.
Therefore, the ex ante payoff functions
in the two models also coincide.

An SCF is \textit{Bayesian incentive compatible} (BIC)
if the profile
of truthful strategies
is a Bayesian Nash equilibrium
of the direct mechanism.
By the revelation principle,
it is without loss of generality
to consider only direct mechanisms.

The following is the impossibility result which underlies the WTA dilemma.

\begin{prop}
	Under
	Assumption \ref{as:general},
	an SCF
		is Pareto
		efficient and
		Bayesian incentive compatible
		if and only if it is dictatorial.
	\label{prop:mech}
\end{prop}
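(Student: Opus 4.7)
My plan is to combine the Pareto characterization (Lemma \ref{lem:char}(i)) with an adaptation of the dominance argument in the proof of Theorem \ref{thm:dilemma}(i), reinterpreted in the direct-mechanism setting.

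\textbf{``If'' direction.} Suppose $d$ is dictatorial: $d(\theta)=\text{sgn}\,\theta_{i^*}$ almost surely, for some $i^*$. This is a weighted majority rule with $\lambda_{i^*}=1$ and $\lambda_j=0$ ($j\neq i^*$), so Lemma \ref{lem:char}(i) yields Pareto efficiency. Under the truthful profile, agent $i^*$ receives the pointwise maximal payoff $|\Theta_{i^*}|$, and any unilateral deviation can only flip the sign of $d$ and weakly reduce that payoff; agents $j\neq i^*$ have no influence on $d$ and are trivially willing to report truthfully. Hence $d$ is BIC.

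\textbf{``Only if'' direction.} Assume $d$ is Pareto efficient and BIC. By Lemma \ref{lem:char}(i), $d$ is equivalent to a weighted majority rule with coefficients $\lambda\in\mathbb{R}^n_+\setminus\{\mathbf{0}\}$. Suppose for contradiction that $d$ is non-dictatorial; then at least two coefficients are strictly positive, say $\lambda_1>0$ and $\lambda_k>0$ for some $k\neq 1$. I will exhibit a profitable deviation from the truthful profile for agent $1$, contradicting BIC. Writing $S_{-1}:=\sum_{j\neq 1}\lambda_j\Theta_j$, agent $1$'s interim payoff from reporting $x_1$ given $\Theta_1=\theta_1$ takes the form (\ref{eq:conditional_welfare}) with weights $\lambda$ and $\phi_j(\Theta_j)=\Theta_j$; for $\theta_1>0$ it equals $\theta_1\bigl[2\mathbb{P}\{\lambda_1 x_1+S_{-1}>0\mid\Theta_1=\theta_1\}-1\bigr]$, which is non-decreasing in $x_1$.

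By Assumption \ref{as:general}, the conditional distribution of $S_{-1}$ given $\Theta_1=\theta_1$ is absolutely continuous with support $[-L,L]$, where $L:=\sum_{j\neq 1}\lambda_j>0$ (using $\lambda_k>0$), and hence assigns strictly positive probability to every non-degenerate subinterval of $(-L,L)$. For every $\theta_1\in(0,\min\{1,L/\lambda_1\})$ the half-open interval $(-\lambda_1,-\lambda_1\theta_1]$ then overlaps $[-L,L]$ in a non-degenerate subinterval, so
\begin{align*}
&\mathbb{P}\{\lambda_1+S_{-1}>0\mid\Theta_1=\theta_1\} - \mathbb{P}\{\lambda_1\theta_1+S_{-1}>0\mid\Theta_1=\theta_1\} \\
&\qquad = \mathbb{P}\{-\lambda_1<S_{-1}\leq -\lambda_1\theta_1\mid\Theta_1=\theta_1\} > 0.
\end{align*}
Since the set of qualifying $\theta_1$ has positive probability under $\Theta_1$, reporting $x_1=1$ strictly beats truthful reporting on an event of positive probability, contradicting BIC. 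The main technical hurdle is this last claim: note that the no-dictator Assumption \ref{as:no_dictator} used in Theorem \ref{thm:dilemma}(i) is \emph{not} available here, so a single ``oversized'' coefficient $\lambda_1>L$ remains possible; the key observation is that we need only one further coordinate $\Theta_k$ to contribute nontrivially to $S_{-1}$ so that the regime $\theta_1\in(0,\min\{1,L/\lambda_1\})$ is non-empty, which is guaranteed by the hypothesis that two Pareto weights are strictly positive together with Assumption \ref{as:general}.
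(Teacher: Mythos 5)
Your proof is correct and follows essentially the same route as the paper: Lemma \ref{lem:char}(i) reduces Pareto efficiency to being a weighted majority rule, and non-dictatorship is then refuted by exhibiting an exaggeration incentive for an agent with a strictly positive coefficient when all others report truthfully. You are in fact slightly more careful than the paper on one point: the paper simply cites the best-response argument from the proof of Theorem \ref{thm:dilemma}(i), whose strictness step invoked Assumption \ref{as:no_dictator}, whereas you explicitly handle the possible regime $\lambda_1 > \sum_{j\neq 1}\lambda_j$ by restricting attention to types $\theta_1 < \min\{1, L/\lambda_1\}$, for which the deviation to $x_1=1$ remains strictly profitable on a positive-measure event.
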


\begin{proof}
	It is obvious that every dictatorial
	SCF is Pareto efficient and Bayesian incentive compatible.
	By Lemma \ref{lem:char}(i),
	it suffices to check that 
	if a weighted majority rule $d$ is not dictatorial, then it is not Bayesian incentive
	compatible.
	In the proof
	of Theorem
	\ref{thm:dilemma}(i),
	we have shown
	that
	in game $\Gamma$,
	if $\phi_{-i}$
	is such that
	each $\phi_j(\Theta_j)$ ($j\neq i$) has full support $[-1,1]$,
	the unique
	(up to equivalence)
	best response for group $i$
	is the winner-take-all rule.
	Thus,
	in the direct mechanism for $d$,
	the unique (up to equivalence) best response for
	each agent $i$ against
	the profile in which all other agents play a truthful strategy is $\sigma_i(\theta_i)
	=\text{ sgn }\theta_i$, which is again not
	a truthful strategy.
	Thus the profile of truthful strategies
	is not a Bayesian Nash equilibrium.
\end{proof}

The essence of the impossibility described in Proposition \ref{prop:mech} lies in 
the fundamental incompatibility between 
Pareto efficiency and equilibrium behavior in the cardinal preference aggregation problem.

In order to understand where the incompatibility comes from, consider the classical Gibbard-Satterthwaite Theorem, which states impossibility of achieving both strategyproofness and non-dictatorship in the \textit{ordinal} preference aggregation problem.
Relaxing the strategyproofness condition 
to 
Bayesian incentive compatibility
requires the introduction of expected payoff, 
as BIC is defined on the solution concept of Bayesian Nash equilibrium. 
This means that we need to consider a \textit{cardinal} preference aggregation problem.


An impossibility result analogous to the Gibbard-Satterthwaite theorem is no longer obtained,
when the SCF takes cardinal preferences as its input. 
This is shown by a counterexample: the winner-take-all profile is 
a non-dictatorial SCF which satisfies Bayesian incentive compatibility. 
BIC alone is not sufficient to imply dictatorship.

Essentially, only ordinal information can be aggregated when Bayesian incentive compatibility is required. 
To see why, suppose that two preferences types $u$ and $v$ are in affine transformation, that is, there exists $\alpha>0$ and $\beta \in \mathbb{R}$ such that $u= \alpha v+\beta \mathbbm{1}$ (call such a transformation as \textit{purely cardinal}).
If the outcome differs by reporting between $u$ and $v$, the incentive compatibility of either $u$ or $v$ should be violated. 
To be more precise, consider a purely cardinal change in preferences.
The agent's preferences on the lotteries over the alternatives are unchanged. 
If the lottery over the alternatives changes by a purely cardinal change of the agent's report, it means that she can manipulate the outcome even though her  preferences over the lotteries are unchanged, implying a violation of incentive compatibility. 
Therefore, by requiring Bayesian incentive compatibility, the outcome should be equivalent up to purely cardinal changes,
and thus only ordinal information can be aggregated at most.\footnote{More precisely, what is invariant is the interim expected payoff, and thus the ex post collective choice may depend on cardinal intensities. However, such dependence is eventually inconsequential in the sense that the interim payoffs (and thus the ex ante payoffs) are invariant. For further discussion, see \cite{SCHMITZTROGER2012}.}   

Our observation that BIC implies ordinal aggregation is coherent with the results obtained in the literature. 
\cite{AzrieliKim2014} characterize the second-best social choice functions and show that they are ordinal qualified weighted majority rules. 
\cite{EHLERS202031}
provide a thorough analysis of the conditions under which BIC implies ordinality. 
 
On the other hand, Pareto efficiency requires aggregation of cardinal information. 
To see why, remember that a Pareto efficient allocation should solve the maximization problem of the social welfare weighted by Pareto coefficients.
By definition, weighted social welfare depends continuously on the cardinal preferences of each agent.\footnote{
Moreover, it is due to its linearity that we could provide a full characterization of the Pareto frontier by the weighted majority rules in Lemma \ref{lem:char}.} 
Therefore, cardinal information concerning the agents' preferences (such as intensity) should be reflected continuously to the social outcome when Pareto efficiency is required.

In sum, requiring incentive compatibility implies ordinal aggregation, while requiring Pareto efficiency implies cardinal aggregation. The fundamental property behind Proposition \ref{prop:mech} is the incompatibility between the two types of aggregation. 
The result is coherent with the impossibility theorem obtained in 
\cite{BORGERS20092057}
in case of three alternatives and two agents. 
\cite{EHLERS202031}
provides a general result for any number of alternatives by showing that a weighted utilitarian SCF is dictatorial if and only if it satisfies BIC under an independence condition (Theorem 8).


When applied to the group-wide voting problem, Proposition \ref{prop:mech} provides an interpretation of our main result stated in Theorem \ref{thm:dilemma}.
Behind the dilemma structure stated in Theorem \ref{thm:dilemma}
lies the impossibility of reconciling both Bayesian incentive compatibility and Pareto efficiency, proven in a more abstract mechanism design context in Proposition \ref{prop:mech}.

\section{Asymptotic and Computational Results}
\label{sec:asym}

\subsection{Asymptotic Analysis}

We saw above that the game is a Prisoner's Dilemma. In this section, we provide further insights on the welfare properties, by focusing on the following situations in which: (i) the number of groups is sufficiently large, and (ii) the preferences of the members are distributed symmetrically. 
These properties allow us to provide an asymptotic and normative analysis.

Often the difficulty of analysis arises from the discrete nature of the problem. Since the social decision $D_\phi$ is determined as a function of the sum of the weights allocated to the alternatives across the groups, computing the expected payoffs may require the classification of a large number of success configurations which increases exponentially as the number of groups increases, rendering the analysis prohibitively costly. We overcome this difficulty by studying asymptotic properties. In order to check the sensibility of our analysis, we provide Monte Carlo simulation results later in the section, using an example of the US Electoral College.

\bigskip

In order to study asymptotic properties, let us consider a sequence of weights $(w_i)_{i=1}^\infty$, exogenously given as a fixed parameter.

\begin{as}
	The sequence of weights $(w_i)_{i=1}^\infty$ satisfies the following properties.
	\begin{itemize}
		\item[(i)]
		$w_1,w_2,\cdots$
		are in a finite
		interval $[\underline{w},\bar{w}]$
		for some $0\leq
		\underline{w}<\bar{w}$.
		\item[(ii)] 
		As $n\to\infty$, the statistical distribution $G_n$ induced by $(w_i)_{i=1}^n$
		weakly converges to a distribution $G$ with support $[\underline{w},\bar{w}]$.\footnote{The
			statistical distribution function $G_n$ induced by $(w_i)_{i=1}^n$ is
			defined by $G_n(x)=\#\{i\leq n | w_i\leq x\}/n$ for each $x$.
			$G_n$ weakly converges to $G$ if $G_n(x)\to G(x)$ at every
			point $x$ of continuity of $G$.}
	\end{itemize}
	\label{as:w}
\end{as}

Assumption \ref{as:w}
guarantees that
for large $n$, the statistical distribution
of weights
$G_n$ is sufficiently close to some
well-behaved distribution $G$,
on which our asymptotic analysis is based.

Additionally, we impose an impartiality assumption for our normative analysis: 
\begin{as}
	The variables $(\Theta_i)_{i=1}^\infty$
	are drawn independently
	from a common symmetric distribution $F$.
	\label{as:cor}
\end{as}

As in \cite{FelsenthalMachover1998}, a normative analysis requires impartiality, and a study of fundamental rules in the society, such as a constitution, should be free from specific dependence on the ex post realization of the group characteristics. 
Assumption \ref{as:cor} allows our normative analysis to abstract away the distributional details. Of course, a normative analysis is best complemented by a positive analysis which takes into account the actual characteristics of the distributions, as in \cite{BeisbartBovens2008}.

Following the symmetry of the preferences, our analysis also focuses on \textit{symmetric} profiles, in which all groups use the same rule: $\phi_i=\phi$ for all $i$.
With a slight abuse of notation,  we write $\phi$ both
for a \textit{single rule} $\phi$ and for the \textit{symmetric profile} 
$(\phi,\phi,\cdots)$, which should not create confusion as long as we refer to symmetric profiles.
As for the alternatives, 
it is natural to consider that the label should not matter when the group-wide vote margin is translated into the weight allocation, given the symmetry of the preferences. 

\begin{as} \label{as:neutral}
	We assume that the rule is monotone and {neutral}, that is, $\phi$ is a non-decreasing, odd function: $\phi( \theta_i) = - \phi( - \theta_i)$.
\end{as}

Let $\pi_i(\phi;n)$
denote the expected payoff
for group $i (\leq n)$ under profile $\phi$ when the set
of groups is $\{1,\cdots,n\}$ and each group $j$'s weight is $w_j$,
the $j$th component of the sequence of weights.
The definition of $\pi_i(\phi;n)$
is the same as $\pi_i(\phi)$ in the preceding sections;
the new notation just clarifies its dependence on 
the number of groups $n$.

The main welfare criterion employed in this section is the asymptotic Pareto dominance. 

\begin{definition}
	For two symmetric profiles $\phi$ and $\psi$,
	we say that $\phi$ \textit{asymptotically Pareto dominates}
	$\psi$ if  there exists $N$ such that for all $n>N$
	and all $i=1,\cdots,n$, 
	\[
	\pi_i(\phi;n)>
	\pi_i(\psi;n).
	\]
\end{definition}

\subsection{Pareto Dominance}

The following is the main result in our asymptotic analysis.

\begin{theorem}
	Under Assumptions \ref{as:general}-\ref{as:neutral}, 
	the proportional profile asymptotically
	Pareto dominates
	all other symmetric profiles.
	In particular, it asymptotically Pareto dominates the {dominant-strategy} equilibrium of the game, {i.e.}, the symmetric winner-take-all profile.
	\label{thm:pareto}
\end{theorem}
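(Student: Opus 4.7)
The plan is to derive the leading-order asymptotics of $\pi_i(\phi;n)$ as $n\to\infty$ via a conditional central limit theorem argument, and then to identify the proportional rule as the unique maximizer of the leading coefficient by a Cauchy--Schwarz inequality.

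First, I would condition on $\Theta_i$ and write $S_{-i} := \sum_{j\neq i} w_j\phi(\Theta_j)$. By the neutrality of $\phi$ combined with the symmetry of $F$ (Assumptions \ref{as:cor}--\ref{as:neutral}), each term $w_j\phi(\Theta_j)$ is a bounded mean-zero random variable, and $S_{-i}$ has a symmetric distribution so that $\mathbb{P}(S_{-i}>0)=\tfrac{1}{2}$. Thus
\[
\pi_i(\phi;n)
= 2\,\mathbb{E}\Bigl[\Theta_i\bigl(\mathbb{P}(S_{-i}>-w_i\phi(\Theta_i)\mid\Theta_i) - \tfrac{1}{2}\bigr)\Bigr].
\]
A Lindeberg-type CLT applied to $S_{-i}/\tau_n$, with $\tau_n^2 = \mathbb{E}[\phi(\Theta)^2]\sum_{j\neq i}w_j^2$, together with a local limit theorem for $S_{-i}$, yields, for bounded $s=w_i\phi(\Theta_i)$,
\[
\mathbb{P}(S_{-i}>-s) - \tfrac{1}{2} \;=\; s\,f_{S_{-i}}(0) + o(\tau_n^{-1}),
\qquad f_{S_{-i}}(0) \;\sim\; \frac{1}{\tau_n\sqrt{2\pi}}.
\]
This is the essential content of the correlation lemma (Lemma \ref{lem:pareto}) foreshadowed in the introduction.

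Taking expectation over $\Theta_i$ and invoking Assumption \ref{as:w}(ii) so that $\tau_n^2\sim n\mu_2\,\mathbb{E}[\phi(\Theta)^2]$ with $\mu_2=\int w^2\,dG(w)$, I obtain
\[
\pi_i(\phi;n) \;\sim\; \frac{2w_i}{\sqrt{2\pi n\mu_2}}\cdot R(\phi),
\qquad R(\phi) \;:=\; \frac{\mathbb{E}[\Theta\phi(\Theta)]}{\sqrt{\mathbb{E}[\phi(\Theta)^2]}}.
\]
Since the group-dependent prefactor $w_i/\sqrt{n\mu_2}$ is strictly positive and does not depend on $\phi$, the comparison of symmetric profiles in the asymptotic limit reduces to comparing $R(\phi)$. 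By the Cauchy--Schwarz inequality $R(\phi)\le \sqrt{\mathbb{E}[\Theta^2]}$, with equality if and only if $\phi(\Theta)$ is a positive scalar multiple of $\Theta$ almost surely, that is, $\phi$ is equivalent (in the sense preceding Lemma \ref{lem:char}) to $\phi^{\rm PR}$. For every non-trivial symmetric profile $\phi$ not equivalent to $\phi^{\rm PR}$, therefore, $R(\phi)<R(\phi^{\rm PR})=\sqrt{\mathbb{E}[\Theta^2]}$; once $n$ is large enough this gap in the leading coefficients dominates the remainder term uniformly over $i\le n$, so $\pi_i(\phi^{\rm PR};n)>\pi_i(\phi;n)$ for all $i\le n$, which is precisely the definition of asymptotic Pareto dominance. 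The trivial case $\phi\equiv 0$ is immediate since then $\pi_i(\phi;n)=0$.

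The hard part will be justifying the local limit theorem uniformly in $i$ across the full class of admissible symmetric profiles. The difficulty is that $\phi(\Theta)$ can be continuously distributed (as for $\phi^{\rm PR}$ and mixed rules), lattice-valued (as for $\phi^{\rm WTA}$, where $\phi(\Theta)\in\{-1,+1\}$), or a hybrid, while the classical LLT requires either smoothness of the summand density or a lattice-span condition. A unified argument---the correlation lemma---must therefore either exploit characteristic-function estimates that remain valid in the discrete case or treat the continuous and lattice contributions separately and combine them. Verifying that the $o(\tau_n^{-1})$ remainder is uniform in $i\le n$ then follows from the uniform bounds on weights in Assumption \ref{as:w}(i), provided the constants in the error term depend only on $F$ and $G$ and not on $n$ or on the specific $\phi$ under comparison.
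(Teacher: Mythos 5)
Your proposal follows essentially the same route as the paper: the conditional decomposition in your first display is the paper's Lemma \ref{lem:pareto}(i) (the paper writes it as $2\int_0^1\theta_i\,\mathbb{P}\{-w_i\phi(\theta_i)<S_{-i}\leq w_i\phi(\theta_i)\}\,dF(\theta_i)$, which is the same thing by the symmetry of $S_{-i}$), your asymptotic formula $\pi_i(\phi;n)\sim \frac{2w_i}{\sqrt{2\pi n\mu_2}}\cdot\frac{\mathbb{E}[\Theta\phi(\Theta)]}{\sqrt{\mathbb{E}[\phi(\Theta)^2]}}$ is exactly Lemma \ref{lem:pareto}(ii) rewritten without the correlation notation, and the Cauchy--Schwarz step is precisely the paper's observation that ${\rm Corr}[\Theta,\phi(\Theta)]$ is maximized at $1$ by $\phi^{\rm PR}$. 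The uniformity-in-$i$ argument via uniformity in $w_i\in[\underline{w},\bar{w}]$ is also the paper's.

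The one genuine gap is the step you yourself flag as ``the hard part'': justifying a local limit theorem for $S_{-i}=\sum_{j\neq i}w_j\phi(\Theta_j)$ when $\phi(\Theta_j)$ is lattice-valued, as under $\phi^{\rm WTA}$ where each summand takes only the two values $\pm w_j$. You correctly identify the obstruction but do not supply the idea that resolves it, and without it the approximation $\mathbb{P}(S_{-i}>-s)-\tfrac12\approx s\,f_{S_{-i}}(0)$ is unjustified for exactly the profile the theorem most needs to cover (indeed, if all weights were equal, $S_{-i}$ under WTA would live on a lattice of span $2w$, the atoms of $S_{-i}$ would be of order $n^{-1/2}$ --- the same order as the main term --- and the density heuristic would fail). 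The paper's resolution is to invoke Mineka's (1970) LLT for sums of independent, \emph{non-identically distributed} bounded variables, whose anti-lattice condition ($\beta$) is verified not through any smoothness of $\phi(\Theta)$ but through the heterogeneity of the weights: Assumption \ref{as:w} forces $G$ to have full support on a non-degenerate interval $[\underline{w},\bar{w}]$ with $\underline{w}<\bar{w}$, and the paper's separable form $w_i\phi_i=h_1(w_i)h_2(\theta_i)+h_3(w_i)\,{\rm sgn}\,\theta_i$ with $h_3$ continuous and non-constant (Assumption \ref{as:fnform}, which every symmetric profile satisfies) is exactly what lets a positive fraction of the summands land, with probability bounded away from zero, in the ``spread'' set $A(t,\varepsilon)$. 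So the missing idea is that the lattice obstruction is broken by the weight distribution, not by the rule; a proof that tried to handle the continuous and lattice cases of $\phi(\Theta)$ separately, as you suggest as one option, would not by itself work for equal weights and is not how the paper proceeds.
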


We use the following lemma to prove Theorem \ref{thm:pareto}. The proof of Lemma \ref{lem:pareto} is relegated to the Appendix.
The proof of part (ii) uses a more general result, Lemma \ref{lem:cdm}, stated in
the next subsection, whose proof also appears in the Appendix.

\begin{lem}
	Under Assumptions \ref{as:general}-\ref{as:neutral}, the following statements hold.

	\begin{itemize}
		\item[(i)] 
		For any symmetric profile $\phi$,
		\begin{equation*}
		\begin{split}
		&\pi_i(\phi;n)\\
		&=
		2\int_0^1\theta_i \mathbb{P}\Bigg\{-w_i\phi(\theta_i)<
		\sum_{j\leq n,\,j\neq i}w_j\phi(\Theta_j)\leq
		w_i\phi(\theta_i)
		\Bigg\}dF(\theta_i).
		\end{split}
		\label{eq:pi_m}
		\end{equation*}
		
		\item[(ii)]
		For any symmetric profile $\phi$,
		as $n\to\infty$,
		\[
		\sqrt{2\pi n}
		\pi_i(\phi;n)
		\to
		2w_i
		\sqrt{
			\frac{
				\mathbb{E}[\Theta^2]
			}{
				\int_{\underline{w}}^{\bar{w}}w^2dG(w)
			}
		}
		{\rm\, Corr\,}[\Theta,\phi(\Theta)],\footnote{Since
			$\Theta$ and $\phi(\Theta)$
			are symmetrically
			distributed, the correlation
			is given by
			${\rm Corr\,} [\Theta,\phi(\Theta)]=
			\mathbb{E}[\Theta\phi(\Theta)]/
			\sqrt{\mathbb{E}[\Theta^2]\mathbb{E}[\phi(\Theta)^2]}$
			unless $\phi(\Theta)$
			is almost surely zero. If $\phi(\Theta)$ is almost surely zero,
			then the correlation
			is zero.
		}
		\]
		uniformly in $w_i\in[\underline{w},\bar{w}]$,
		where
		$\Theta$ is a random variable having the same distribution
		$F$ as $\Theta_i$.
		The limit depends on the
		profile $\phi$
		only through the factor
		${\rm Corr}[\Theta,\phi(\Theta)]$.

	\end{itemize}
	\label{lem:pareto}
\end{lem}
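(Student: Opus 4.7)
The plan is to prove part (i) by a direct symmetry argument and to deduce part (ii) by substituting a local central limit theorem (LCLT) estimate, supplied by the auxiliary Lemma \ref{lem:cdm}, into the formula from part (i).

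For part (i), I would fix $i$ and set $S_{-i}:=\sum_{j\leq n,\,j\neq i}w_j\phi(\Theta_j)$. Under Assumption \ref{as:cor} together with the oddness of $\phi$ (Assumption \ref{as:neutral}), each summand $w_j\phi(\Theta_j)$ is independent, bounded and symmetric about zero, so $S_{-i}$ itself is symmetric about zero. Conditioning on $\Theta_i=\theta_i$ and using the fair tie-breaking in (\ref{eq:decision}) yields
\[
\mathbb{E}\bigl[d_\phi(\Theta)\mid\Theta_i=\theta_i\bigr]=\mathbb{P}\{S_{-i}>-w_i\phi(\theta_i)\}-\mathbb{P}\{S_{-i}<-w_i\phi(\theta_i)\},
\]
which the symmetry of $S_{-i}$ reduces, for $\theta_i\geq 0$, to $\mathbb{P}\{-w_i\phi(\theta_i)<S_{-i}\leq w_i\phi(\theta_i)\}$. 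Splitting $\pi_i(\phi;n)=\mathbb{E}[\Theta_i d_\phi(\Theta)]$ into the contributions from $\theta_i>0$ and $\theta_i<0$, and using the symmetry of $F$ together with the oddness of $\phi$ to change variables in the second integral, shows that the two halves coincide; this produces the factor $2$ and the stated expression.

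For part (ii), the strategy is to approximate the small-interval probability appearing in (i) by its Gaussian counterpart. The variance of $S_{-i}$ equals $\sigma_n^2=\mathbb{E}[\phi(\Theta)^2]\sum_{j\neq i}w_j^2$, and Assumption \ref{as:w} gives $\sigma_n^2/n\to\sigma^2:=\mathbb{E}[\phi(\Theta)^2]\int_{\underline{w}}^{\bar{w}}w^2\,dG(w)$. Lemma \ref{lem:cdm}, the local limit result referenced in the statement, provides the uniform estimate
\[
\sqrt{n}\,\mathbb{P}\{-a<S_{-i}\leq a\}=\frac{2a}{\sqrt{2\pi\sigma^2}}+o(1),
\]
where the $o(1)$ is uniform in $a$ on any compact set, and in particular for $a=w_i\phi(\theta_i)$ with $w_i\in[\underline{w},\bar{w}]$ and $\theta_i\in[0,1]$. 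Inserting this into the identity from (i), multiplying by $\sqrt{2\pi n}$, and passing the limit through the integral (justified by the uniformity and the boundedness of $\phi$ and the weights) gives
\[
\sqrt{2\pi n}\,\pi_i(\phi;n)\to\frac{4w_i}{\sigma}\int_0^1\theta\,\phi(\theta)\,dF(\theta)=\frac{2w_i\,\mathbb{E}[\Theta\phi(\Theta)]}{\sigma},
\]
using $2\int_0^1\theta\,\phi(\theta)\,dF(\theta)=\mathbb{E}[\Theta\phi(\Theta)]$ from the symmetry of $F$ and oddness of $\phi$. Rewriting $\mathbb{E}[\Theta\phi(\Theta)]=\mathrm{Corr}[\Theta,\phi(\Theta)]\sqrt{\mathbb{E}[\Theta^2]\mathbb{E}[\phi(\Theta)^2]}$ and cancelling $\sqrt{\mathbb{E}[\phi(\Theta)^2]}$ against the same factor in $\sigma$ produces the announced limit; uniformity in $w_i$ is automatic since the leading term is linear in $w_i$ and the LCLT error is uniform on $[\underline{w},\bar{w}]$.

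The main obstacle is the local CLT step. For a general Borel rule $\phi$ — most conspicuously the winner-take-all rule, for which $\phi(\Theta_j)\in\{-1,+1\}$ makes $S_{-i}$ lattice-valued — density-based local limit theorems do not apply directly, so the uniform small-interval estimate must be obtained by more delicate means; this is precisely the role of the auxiliary Lemma \ref{lem:cdm}. The degenerate case in which $\phi(\Theta)$ is almost surely zero is handled separately and trivially, both sides of the claimed limit being zero.
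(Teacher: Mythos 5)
Your proposal is correct and follows essentially the same route as the paper: part (i) by conditioning on $\Theta_i$ and exploiting the symmetry of the independent sum $S_{-i}$, and part (ii) by feeding a local-limit estimate for $\mathbb{P}\{-w_i\phi(\theta_i)<S_{-i}\leq w_i\phi(\theta_i)\}$ (handled by Lemma \ref{lem:cdm} via Mineka's LLT, which covers the lattice case you rightly flag) into the formula from (i) and passing to the limit under the integral. The only cosmetic difference is that you assert uniformity of the error in $a$ over compacts to get uniformity in $w_i$, whereas the paper obtains the uniform convergence more cautiously from pointwise convergence plus monotonicity of the integrand in $w_i$ and continuity of the limit.
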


\begin{proof}[Proof of Theorem
	\ref{thm:pareto}]
	\hspace{1cm}\newline
	The heart of the proof is in the correlation result shown in part (ii) of Lemma \ref{lem:pareto}.
	It follows that if 
	correlation of $\phi(\Theta)$ with $\Theta$ is higher than 
	that of $\psi(\Theta)$,
	then for each group $i$, there
	exists $N_i$ such that if the number of groups ($n$) is greater than $N_i$, group $i$ ($\leq n$) will be better off under $\phi$ than $\psi$.

	Note that
	the convergence in part (ii) of Lemma \ref{lem:pareto}
	is uniform in $w_i\in[\underline{w},\bar{w}]$.
	This
	implies that the convergence 
	is uniform in $i=1,2,\cdots$.\footnote{A more detailed explanation of this step is the following.
		By Lemma \ref{lem:pareto} (i), $\sqrt{2\pi n}\pi_i(\phi;n))$ 
		asymptotically behaves as
		$2\sqrt{2\pi n}\int_0^1\theta \mathbb{P}\{-w_i\phi(\theta)<
		\sum_{j\leq n}w_j\phi(\Theta_j)\leq
		w_i\phi(\theta)
		\}dF(\theta)$, where whether 
		the sum $\sum_{j\leq n}w_j\phi(\Theta_j)$
		includes the $i$th term
		or not
		is immaterial in the limit.
		The estimate of
		$\sqrt{2\pi n}\pi_i(\phi;n)$ therefore
		has the form $f_n(w_i)$,
		where $f_n(x):=2\sqrt{2\pi n}\int_0^1\theta \mathbb{P}\{-x\phi(\theta)<
		\sum_{j\leq n}w_j\phi(\Theta_j)\leq
		x\phi(\theta)
		\}dF(\theta)$.
		Lemma \ref{lem:pareto} (ii) implies that
		$f_n(x)$ converges uniformly in $x\in[\underline{w},\bar{w}]$,
		which in turn implies that
		the convergence of $\sqrt{2\pi n}\pi_i(\phi;n)
		\approx f_n(w_i)$
		is uniform in $i=1,2,\cdots$.}
	Thus there is $N$ with the above property,
	without subscript $i$, which applies to
	all groups $i=1,2,\cdots$.
	Therefore, if
	correlation of $\phi(\Theta)$ with $\Theta$ is higher than 
	that of $\psi(\Theta)$,
	then $\phi$ asymptotically Pareto dominates $\psi$.
	
	Since the perfect correlation ${\rm Corr}[\Theta,\phi^{\rm PR}(\Theta)]=1$ is attained by the proportional rule, Theorem \ref{thm:pareto} follows. 
\end{proof}

The above results show that the winner-take-all rule is characterized by its strategic dominance, while the proportional rule is characterized by its asymptotic Pareto dominance. The following proposition provides a complete Pareto order among all the linear combinations of the two rules. 

Remember that we defined the mixed rules in Section \ref{sec:themodel} above. For $0\leq a\leq 1$, a fraction $a$ of the weight is assigned to the winner of the {group-wide vote}, while the rest, $1-a$, is distributed proportionally to each alternative:
$$
\phi^a(\theta_i)=a \phi^{\rm WTA}(\theta_i)+(1-a)\phi^{\rm PR}(\theta_i).
$$

\begin{prop}
	Under Assumptions \ref{as:general}-\ref{as:cor},	mixed profile $\phi^a$
	asymptotically Pareto
	dominates 
	mixed profile $\phi^{a^\prime}$ for any
	$0\leq a<a^\prime\leq 1$.
	In particular, the proportional profile asymptotically Pareto dominates any mixed profile $\phi^a$ for $0 < a<1$, which in turn asymptotically Pareto dominates the winner-take-all profile. In other words, all mixed profiles can be ordered by asymptotic Pareto dominance, from the proportional profile as the best, to the winner-take-all profile as the worst.	
	\label{prop:mixed}
\end{prop}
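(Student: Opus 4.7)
The plan is to apply Lemma \ref{lem:pareto}(ii), which reduces asymptotic Pareto comparison of symmetric profiles to comparison of the single scalar $\mathrm{Corr}[\Theta,\phi(\Theta)]$. The mixed rules $\phi^a$ are odd and nondecreasing, so the lemma applies here even though Assumption \ref{as:neutral} is not listed among the hypotheses of this proposition. The uniformity argument from the proof of Theorem \ref{thm:pareto} then carries over verbatim: since the convergence in Lemma \ref{lem:pareto}(ii) is uniform in $w_i \in [\underline{w},\bar{w}]$, a strict inequality $\mathrm{Corr}[\Theta,\phi^a(\Theta)] > \mathrm{Corr}[\Theta,\phi^{a'}(\Theta)]$ yields a single threshold $N$ past which $\pi_i(\phi^a;n) > \pi_i(\phi^{a'};n)$ holds simultaneously for every group $i\leq n$.

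The remaining task is to establish strict monotonicity of $a \mapsto \mathrm{Corr}[\Theta,\phi^a(\Theta)]$ on $[0,1]$. Writing $\phi^a(\Theta) = (1-a)\Theta + a\,\mathrm{sgn}(\Theta)$ and using symmetry of $\Theta$ (so that both $\Theta$ and $\mathrm{sgn}(\Theta)$ have zero mean), an elementary computation with $\sigma^2 := \mathbb{E}[\Theta^2]$ and $\mu := \mathbb{E}[|\Theta|]$ yields
\[
\mathrm{Corr}[\Theta,\phi^a(\Theta)] \;=\; \frac{(1-a)\sigma^2 + a\mu}{\sigma\sqrt{(1-a)^2\sigma^2 + 2a(1-a)\mu + a^2}}.
\]
By Cauchy--Schwarz, $\mu = \mathbb{E}[\Theta\cdot\mathrm{sgn}(\Theta)] \leq \sigma$, with equality only if $|\Theta|$ is almost surely constant. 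Assumption \ref{as:general} forces $\Theta$ to have a density with full support on $[-1,1]$, ruling out this degenerate case, so $\rho := \mu/\sigma \in (0,1)$.

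The key manipulation is the substitution $t := a/((1-a)\sigma)$, a strictly increasing bijection from $[0,1)$ onto $[0,\infty)$, which collapses the expression above to $g(t) := (1+t\rho)/\sqrt{1+2t\rho+t^2}$. Differentiating gives
\[
\frac{d}{dt}g(t)^2 \;=\; \frac{2t(\rho^2-1)(1+t\rho)}{(1+2t\rho+t^2)^2},
\]
which is strictly negative for every $t>0$ since $\rho\in(0,1)$. Thus $g$ is strictly decreasing on $[0,\infty)$ with $g(0)=1$ and $g(\infty)=\rho$, so $\mathrm{Corr}[\Theta,\phi^a(\Theta)]$ is strictly decreasing on $[0,1]$; the boundary case $a'=1$ is covered because $g(t)>\rho=g(\infty)$ for every finite $t$. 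Combined with the first paragraph, this yields the asymptotic Pareto dominance claimed. The only non-routine step is spotting the reparameterization $t = a/((1-a)\sigma)$; without it, differentiating in $a$ directly is messier but still tractable. Everything else is bookkeeping on top of the correlation result already established in Lemma \ref{lem:pareto}(ii).
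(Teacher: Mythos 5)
Your proof is correct and follows essentially the same route as the paper's: both reduce the claim, via Lemma \ref{lem:pareto}(ii) and the uniform convergence in $w_i$, to showing that $a\mapsto\mathrm{Corr}[\Theta,\phi^a(\Theta)]$ is strictly decreasing on $[0,1]$, using $\mathbb{E}[|\Theta|]^2<\mathbb{E}[\Theta^2]$ under full support. The only difference is cosmetic: the paper differentiates $\mathrm{Corr}^2$ directly in $a$ (arriving at the sign factor $a\bigl(a\mathbb{E}(|\Theta|)+(1-a)\mathbb{E}(\Theta^2)\bigr)\bigl(\mathbb{E}(|\Theta|)^2-\mathbb{E}(\Theta^2)\bigr)$), whereas you reparameterize by $t=a/((1-a)\sigma)$ first; your explicit handling of the endpoint $a'=1$ and of the applicability of Assumption \ref{as:neutral} to the mixed rules is a welcome touch.
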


\begin{proof}
	In Appendix.
\end{proof}

The winner-take-all rule is not only asymptotically Pareto inefficient, but the worst among the symmetric mixed profiles. 
Is it worse than \textit{any} other symmetric profile? We provide an answer in Remark \ref{rem:worst} below.

\begin{rem}
	{\textit{What is the worst profile?}}
	Theorem
	\ref{thm:pareto}
	leaves the natural
	question of
	whether 
	the winner-take-all profile
	is the worst among all symmetric profiles,
	in terms of asymptotic
	Pareto dominance.
	The answer is negative.
	To see this, note first that, for the winner-take-all profile,
	the correlation
	in Lemma \ref{lem:pareto}
	is strictly positive: 
	${\rm Corr}[\Theta,\phi^{\rm WTA}(\Theta)]=
	\mathbb{E}(|\Theta|)/\sqrt{\mathbb{E}(\Theta^2)}>0$.
	On the other
	hand,
	for the symmetric profile
	$\phi^0$ in which
	the rule is defined
	by $\phi^0(\theta)=0$
	for almost all $\theta$,
	the correlation
	is obviously zero.
	This rule assigns exactly half of the weight to each alternative, regardless of the group-wide vote. 
	Thus 
	the profile $\phi^0$ is the worst among all symmetric profiles, as 
	the social decision is made by a coin toss almost surely, yielding expected payoff 0 to all groups. 
	In the rest of this section, we exclude such a trivial profile from our consideration.
	\label{rem:worst}
\end{rem}


\subsection{Congressional District Method}
\label{subsect:cdm}

The analysis
in the preceding
subsection
suggests that 
the proportional profile is optimal
in terms of Pareto efficiency.
However,
our model
also implies that
this profile
produces an unequal distribution
of welfare;
in fact, this unequal nature
pertains to all symmetric profiles.
The Correlation Lemma \ref{lem:pareto} (ii)
shows that for these profiles,
the expected payoff for a group
is asymptotically proportional to
its weight, providing high expected payoffs to groups with a large weight.

In this subsection,
we examine whether such inequality
can be alleviated without impairing efficiency
by using an asymmetric profile, based on the Congressional District Method (CDM), currently used in Maine and Nebraska.
This profile
allocates
a fixed amount $c$
of
each group's weight
by 
the winner-take-all rule
and the rest by
the proportional rule:
\[
w_i\phi^{\rm CD}(\theta_i,w_i)=
c\phi^{\rm WTA}(\theta_i)+(w_i-c)\phi^{\rm PR}(\theta_i).
\]
We consider the \textit{congressional district profile} $\phi^{\rm CD}$ in which the rule is used by all groups.
Note that the profile is not symmetric 
in the sense that we defined at the beginning of this section.
In a mixed profile, the ratio $a$ is common to all groups. 
However, 
$\phi_i$ is not the same function of $\theta_i$ for all $i$ in $\phi^{\rm CD}$,
because the weight allocation rule depends on $w_i$, which is heterogeneous across groups. 
For example, for a large state such as California, a fixed number of weight $c$ ($=2$) implies a low ratio of $a$ ($=2/55$), while for a small state such as Wyoming, it implies a higher ratio of $a$ ($=2/3$). 
Therefore, we cannot apply Theorem \ref{thm:pareto} in order to obtain a Pareto dominance relationship. 
However, we can obtain a small-group advantage result (Theorem \ref{thm:cdm}) and a Lorenz dominance result (Theorem \ref{thm:ineq}).
To ensure that
the profile is well-defined,
we impose 
that
the lower bound of weights $\underline{w}$ is strictly positive and
$c\in(0,\underline{w}]$.

\begin{theorem}
	Under Assumptions \ref{as:general}-\ref{as:neutral},
	let us consider the congressional district profile with parameter
	$c\leq\underline{w}$. For any symmetric profile $\phi$,
	there exists
	$w^\ast\in[\underline{w},\bar{w}]$
	with the following property:
	for any $\varepsilon>0$, there is $N$ such that for all $n>N$
	and $i=1,\cdots,n$, 
	\begin{equation*}
	\begin{split}
	&w_i<w^\ast-\varepsilon\Rightarrow
	\pi_i(\phi^{\rm CD};n)>\pi_i(\phi;n),\\
	&w_i>w^\ast+\varepsilon\Rightarrow
	\pi_i(\phi^{\rm CD};n)<\pi_i(\phi;n).
	\end{split}
	\end{equation*}
	\label{thm:cdm}
\end{theorem}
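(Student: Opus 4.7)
The plan is to derive the leading-order asymptotics of $\sqrt{2\pi n}\,\pi_i(\phi^{\rm CD};n)$ via the asymmetric correlation result (Lemma \ref{lem:cdm}), compare it with the leading-order asymptotics of $\sqrt{2\pi n}\,\pi_i(\phi;n)$ supplied by Lemma \ref{lem:pareto}(ii), and then promote the resulting limit-level sign pattern to the finite-$n$ strict inequalities with slack $\varepsilon$, exploiting the uniformity of both convergences in $w_i \in [\underline w,\bar w]$.

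Under $\phi^{\rm CD}$ the contribution of group $j$ to the weighted total is $\tau_j(\theta_j) := c\,\mathrm{sgn}(\theta_j) + (w_j - c)\theta_j$, which is odd in $\theta_j$ and has variance $c^2 + 2c(w_j-c)\mathbb{E}|\Theta| + (w_j-c)^2 \mathbb{E}[\Theta^2]$. Applying Lemma \ref{lem:cdm} in its asymmetric form, I expect
$$\sqrt{2\pi n}\,\pi_i(\phi^{\rm CD};n) \to \frac{2\,\mathbb{E}[\Theta\,\tau_i(\Theta)]}{\bar\sigma_{\rm CD}} = \frac{2\bigl[c(\mathbb{E}|\Theta| - \mathbb{E}[\Theta^2]) + w_i\,\mathbb{E}[\Theta^2]\bigr]}{\bar\sigma_{\rm CD}} =: L_{\rm CD}(w_i),$$
where $\bar\sigma_{\rm CD}^2 = \int \mathbb{E}[\tau(\Theta,w)^2]\,dG(w)$, with convergence uniform in $w_i$. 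The key structural fact is that $L_{\rm CD}$ is \emph{affine} in $w_i$ with strictly positive slope $2\mathbb{E}[\Theta^2]/\bar\sigma_{\rm CD}$ \emph{and} strictly positive intercept $2c(\mathbb{E}|\Theta| - \mathbb{E}[\Theta^2])/\bar\sigma_{\rm CD}$. Positivity of the intercept uses $c>0$, together with $|\Theta|\leq 1$ and the absolute continuity with full support of the marginal of $\Theta$ (Assumption \ref{as:general}), which yields $\mathbb{P}\{|\Theta|<1\}>0$ and hence $\mathbb{E}|\Theta| > \mathbb{E}[\Theta^2]$.

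By Lemma \ref{lem:pareto}(ii), $\sqrt{2\pi n}\,\pi_i(\phi;n) \to L_\phi(w_i) = K_\phi\,w_i$, a line through the origin with $K_\phi > 0$ (the trivial rule being excluded by Remark \ref{rem:worst}). The difference $h(w) := L_{\rm CD}(w) - L_\phi(w)$ is then affine with $h(0) > 0$, so exactly one of three geometries arises on $[\underline w,\bar w]$: either $h$ vanishes at a unique $w^\ast \in (\underline w,\bar w)$ with $h > 0$ on its left and $h < 0$ on its right; or $h > 0$ throughout, in which case I set $w^\ast = \bar w$, making the second implication of the theorem vacuous; or $h < 0$ throughout, in which case I set $w^\ast = \underline w$, making the first implication vacuous. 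In every scenario the threshold $w^\ast \in [\underline w,\bar w]$ with the desired limit-level sign pattern exists.

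Finally, given $\varepsilon > 0$, continuity of the affine $h$ gives some $\delta > 0$ with $|h(w)| \geq \delta$ on $\{w \in [\underline w,\bar w] : |w - w^\ast| \geq \varepsilon\}$. Uniformity of the two convergences in $w_i \in [\underline w,\bar w]$ (as exploited in the proof of Theorem \ref{thm:pareto}) supplies an $N$, independent of $i$, such that for all $n > N$ and all $i \leq n$ the finite-$n$ scaled payoffs lie within $\delta/3$ of their respective limits, and the strict finite-$n$ inequalities then inherit the sign of $h(w_i)$. The main analytical hurdle is the first step: invoking Lemma \ref{lem:cdm} correctly in the genuinely asymmetric regime, where the summand distributions depend on $w_j$, and verifying that the local-CLT-type estimate for $\mathbb{P}\{-x < \sum_{j \neq i}\tau_j(\Theta_j) \leq x\}$ is uniform in $x$ on a compact set and in $w_i \in [\underline w,\bar w]$. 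Once that foundational limit is in hand, the rest reduces to comparing two affine functions and bookkeeping the uniform convergence.
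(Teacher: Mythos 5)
Your proposal is correct and follows essentially the same route as the paper's own proof: apply Lemma \ref{lem:cdm} to get the affine limit $Bw_i+C$ for $\phi^{\rm CD}$ with $C>0$ (via $\mathbb{E}|\Theta|>\mathbb{E}[\Theta^2]$ from full support), compare with the linear limit $A^\phi w_i$ for the symmetric profile, locate the single crossing (or set $w^\ast$ to an endpoint), and lift the sign pattern to finite $n$ by uniform convergence in $w_i\in[\underline{w},\bar{w}]$. The only cosmetic difference is that you spell out the three endpoint cases for $w^\ast$ slightly more explicitly than the paper does.
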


The proof of Theorem
\ref{thm:cdm}
uses the following
lemma, which shows that the correlation lemma holds for a class of profiles 
such that the weight allocation rules have the following specific form of separability.   
Its proof and the Local Limit Theorem used in the proof are relegated
to the Appendix. 

\begin{as} \label{as:fnform}
	Let $\phi=\left(\phi_i\right)_{i=1}^\infty$ be a profile. There exist functions $h_1, h_2, h_3$ such that 
\[
w_i\phi_i(\theta_i,w_i)=h_1(w_i)h_2(\theta_i)+
h_3(w_i){\rm\,sgn\,}\theta_i, \text{ for all } i 
\]
where (i)
$h_1$ is bounded, 
(ii)
$h_2$
is an odd function such that
the support of
the distribution
of
$h_2(\Theta_i)$
contains 0, and (iii) 
$h_3$ is
continuous
but not 
constant.\footnote{Under this form, $\phi_i (\cdot, \cdot)$ is the same for all $i$ so that we can omit subscript $i$ whenever there is no confusion.}
\end{as}

It is straightforward to show that Assumption \ref{as:fnform} is satisfied for any symmetric profile 
as well as the congressional district profile. 
For a symmetric profile $\phi$,
let
$h_1(w_i)=w_i$,
$h_2(\theta_i)=\phi(\theta_i)-
r\,{\rm sgn\,}\theta_i$,
and $h_3(w_i)=
w_ir$ where 
$r>0$
is any positive number
in the support of the distribution of $\phi(\Theta)$.\footnote{
This is possible since $\phi(\Theta)$
is symmetrically distributed,
and since we exclude the trivial case
in which $\phi(\Theta)=0$ almost surely.} 
For the congressional
district profile $\phi^{\rm CD}$,
let
$h_1(w_i)=w_i-c$,
$h_2(\theta_i)=\theta_i-\,{\rm sgn\,}\theta_i$,
and
$h_3(w_i)=
w_i$.

\begin{lem}
	Under Assumptions \ref{as:general}-\ref{as:neutral},
	let $\phi$ be a profile which satisfies Assumption \ref{as:fnform}.
Then,
	as $n\to\infty$,
	\begin{equation*}
	\sqrt{2\pi n}
	\pi_i(\phi;n)
	\to
	\frac{2w_i\mathbb{E}[\Theta\phi(\Theta,w_i)]}{
		\sqrt{\int_{\underline{w}}^{\bar{w}}w^2\mathbb{E}[\phi(\Theta,w)^2]dG(w)}},
	\end{equation*}
	uniformly in $w_i\in[\underline{w},\bar{w}]$,
	where $\Theta$ is a random variable having the same distribution
	$F$ as $\Theta_i$.
	\label{lem:cdm}
\end{lem}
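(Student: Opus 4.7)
The plan is to prove the lemma in three stages. First, I will reduce $\pi_i(\phi;n)$ to a one-dimensional integral of a small-interval probability for the ``rest-of-society'' sum, paralleling Lemma \ref{lem:pareto}(i). Second, I will apply a Local Limit Theorem (LLT) to that sum. Third, I will combine the two pieces and verify uniformity in $w_i$.

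Setting $S_{n,i}:=\sum_{j\le n,\,j\ne i} w_j\phi(\Theta_j,w_j)$, Assumptions \ref{as:cor} and \ref{as:neutral} make each summand symmetric and mean-zero, hence so is $S_{n,i}$. Conditioning on $\Theta_i=\theta_i$ and splitting the events $\{d_\phi(\Theta)=+1\}$ and $\{d_\phi(\Theta)=-1\}$ according to the sign of $w_i\phi(\theta_i,w_i)+S_{n,i}$ --- exactly the argument that yields Lemma \ref{lem:pareto}(i) --- gives
\[
\pi_i(\phi;n) \;=\; 2\int_0^1 \theta_i\,\mathbb{P}\bigl\{-w_i\phi(\theta_i,w_i)<S_{n,i}\le w_i\phi(\theta_i,w_i)\bigr\}\,dF(\theta_i).
\]
Symmetry of the profile plays no role here; only symmetry of the marginals and oddness of $\phi(\cdot,w_i)$ are used.

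Next, the summands of $S_{n,i}$ are independent, mean-zero and uniformly bounded, with per-term variance $w_j^2\,\mathbb{E}[\phi(\Theta_j,w_j)^2]$. Assumption \ref{as:w} then yields $\mathrm{Var}(S_{n,i})/n\to\sigma^2 := \int_{\underline w}^{\bar w} w^2\,\mathbb{E}[\phi(\Theta,w)^2]\,dG(w)$. The key input is an LLT for this triangular array, and this is where Assumption \ref{as:fnform} does the work: the decomposition $w_j\phi(\Theta_j,w_j)=h_1(w_j)h_2(\Theta_j)+h_3(w_j)\,\text{sgn}\,\Theta_j$ ensures that each summand carries a non-degenerate absolutely continuous component (since $h_2(\Theta)$ has $0$ in its support, $h_1(w_j)h_2(\Theta_j)$ spreads mass across a neighborhood of zero), while $h_3$ being continuous and non-constant prevents the atomic sign-part from being confined to a common sublattice across $j$. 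The LLT stated in the appendix then delivers, uniformly for $t$ in compact subsets of $[0,\infty)$,
\[
\mathbb{P}\bigl\{-t<S_{n,i}\le t\bigr\} \;=\; \frac{2t}{\sqrt{2\pi n\,\sigma^2}}\bigl(1+o(1)\bigr),
\]
which I apply at $t=w_i\phi(\theta_i,w_i)\in[0,\bar w]$.

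Plugging this estimate into the integral representation and using that $\theta\mapsto\theta\phi(\theta,w_i)$ is even --- so $2\int_0^1\theta\phi(\theta,w_i)\,dF(\theta)=\mathbb{E}[\Theta\phi(\Theta,w_i)]$ --- yields
\[
\sqrt{2\pi n}\,\pi_i(\phi;n) \;\longrightarrow\; \frac{2w_i\,\mathbb{E}[\Theta\phi(\Theta,w_i)]}{\sigma},
\]
which is the claimed limit. Uniformity in $w_i\in[\underline w,\bar w]$ follows because the LLT error is uniform in $t$ on compacts, and the map $w_i\mapsto\mathbb{E}[\Theta\phi(\Theta,w_i)]$ is continuous on this compact interval by bounded convergence (as $\phi$ is bounded). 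The main obstacle is the LLT step: for asymmetric profiles such as $\phi^{\rm CD}$ the summands mix an atom-free part with an atomic sign-part, so neither a classical density-based LLT nor a lattice LLT applies off the shelf. Assumption \ref{as:fnform} is precisely what is needed to verify the hypotheses of the mixed-type LLT stated in the appendix, after which a dominated-convergence argument is required to push the uniform estimate through the $\theta_i$-integral.
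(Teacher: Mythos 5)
Your overall architecture matches the paper's proof exactly: the same reduction of $\pi_i(\phi;n)$ to $2\int_0^1\theta_i\,\mathbb{P}\{-w_i\phi(\theta_i,w_i)<S_{n,i}\le w_i\phi(\theta_i,w_i)\}\,dF(\theta_i)$, the same variance normalization $s_n^2/n\to\int w^2\mathbb{E}[\phi(\Theta,w)^2]\,dG(w)$, the same appeal to the Mineka local limit theorem, and the same dominated-convergence step to pass the estimate through the $\theta_i$-integral. You also correctly identify \emph{which} features of Assumption \ref{as:fnform} matter and why.

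The genuine gap is that you never actually verify condition ($\beta$) of the LLT, which is the technical core of the lemma and occupies most of the paper's proof (Claim 5.3). Your sentence that the decomposition ``prevents the atomic sign-part from being confined to a common sublattice'' is the right intuition, but the verification is not routine: one must exhibit a centering sequence (the paper takes $a_i=h_3(w_i)$), prove $\inf_i\mathbb{P}\{|X_i-a_i|<\delta\}>0$ using the fact that $0$ is in the support of $h_2(\Theta)$, and then show that for every $t\ne0$ there is a positive-$G$-measure set of weights $w_i$ for which $X_i-a_i$ lands, with probability bounded away from zero, in the set $A(t,\varepsilon)$ of points bounded away from the lattice $\pi\mathbb{Z}/|t|$ --- this last step is where continuity and non-constancy of $h_3$ are actually consumed, via the observation that $X_i-a_i=h_1(w_i)h_2(\Theta_i)-2h_3(w_i)$ on $\{\Theta_i<0\}$ and that $-2h_3(w_i)$ sweeps through an interval as $w_i$ varies. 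Without this, the claim that ``Assumption \ref{as:fnform} is precisely what is needed'' is an assertion, not a proof. A secondary flaw: your uniform LLT estimate is written in the multiplicative form $\frac{2t}{\sqrt{2\pi n\sigma^2}}(1+o(1))$ uniformly in $t$ on compacts, which fails as $t\downarrow0$ (the relative error is not controlled there); the usable uniform statement is additive, $\sqrt{2\pi s_n^2}\,\mathbb{P}\{S_{n,i}\in(-t,t]\}\to 2t$. Relatedly, the paper deliberately avoids invoking uniformity of the LLT in the endpoints and instead obtains uniformity in $w_i$ from monotonicity of the integral in $w_1$ together with pointwise convergence to a continuous limit (a Dini-type argument); your route can be made to work but rests on a stronger version of the LLT than the one you quote.
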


\begin{proof}[Proof of Theorem
	\ref{thm:cdm}]
	By Lemma \ref{lem:cdm},
	the expected payoff
	for group $i$ under a symmetric profile $\phi$ tends
	to a linear function of 
	$w_i$. Let $A^\phi$ be the coefficient:
	\begin{equation}
	\begin{split}
	\lim_{n\rightarrow \infty }\sqrt{2\pi n}
	\pi_i(\phi;n)
	&=\frac{2w_i\mathbb{E}[\Theta \phi(\Theta)]}{
		\sqrt{\mathbb{E}[\phi(\Theta)^2]
			\int_{\underline{w}}^{\bar{w}}
			w^2dG(w)
		}
	}\\
	&=:A^{\phi} w_i.
	\end{split}
	\label{eq:aw_phi}
	\end{equation}
	
	For the congressional district profile, remember the definition: 
	\begin{eqnarray*}
		w_j\phi^{\rm CD}\left(\theta_j,w_j\right) &=&c\phi ^{\rm WTA}\left( \theta_j
		\right) +\left( w_j-c\right) \phi ^{\rm PR}\left( \theta_j\right) 
		\notag \\
		&=&c\ \text{sgn}\left( \theta_j\right) +\left( w_j-c\right) \theta_j.
	\end{eqnarray*}%
	We claim that the limit function is affine in $w_i$:%
	\begin{equation}
	\lim_{n\rightarrow \infty }\sqrt{2\pi n}
	\pi_i(\phi^{\rm CD};n)=Bw_i+C.
	\label{eq:bw+c}
	\end{equation}%
	To see that, let us apply Lemma \ref{lem:cdm} again:
	\begin{eqnarray*}
		\lim_{n\rightarrow \infty }\sqrt{2\pi n}
		\pi_i(\phi^{\rm CD};n) 
		&=&2\cdot\frac{w_i\mathbb{E}\left[ \Theta\phi^{\rm CD}\left(\Theta,w_i\right) \right] }{\sqrt{\int_{\underline{w}}^{\bar{w}}w^2 \mathbb{E}\left[
				\phi^{\rm CD}\left(\Theta,w\right) ^{2}\right] dG(w)}} \\
		&=&2\cdot\frac{c\mathbb{E}\left[ \left\vert \Theta \right\vert \right] +\left(
			w_{i}-c\right) \mathbb{E}\left[ \Theta ^{2}\right] }{\sqrt{\int_{\underline{w}}^{\bar{w}}w^2 \mathbb{E}\left[
				\phi^{\rm CD}\left(\Theta,w\right) ^{2}\right] dG(w) }}.
	\end{eqnarray*}%
	Since $\left\vert \theta \right\vert \geq \theta ^{2}$ with a strict
	inequality for $0<\left\vert \theta \right\vert <1,$ 
	the full support condition for $\Theta$
	implies $\mathbb{E}\left[
	\left\vert \Theta \right\vert \right] >\mathbb{E}\left[ \Theta ^{2}\right]$,
	and thus the intercept $C$ is positive. The coefficient of $w_i$
	is: 
	\begin{equation*}
	B=\frac{2\mathbb{E}\left[ \Theta ^{2}\right] }{\sqrt{\int_{\underline{w}}^{\bar{w}}w^2 \mathbb{E}\left[
			\phi^{\rm CD}\left(\Theta,w\right) ^{2}\right] dG(w) }}.
	\end{equation*}
	If $A^{\phi} <B$, combined with $C>0$, the right-hand side of (\ref{eq:bw+c}) is above that of (\ref{eq:aw_phi}). Then, set $w^*=\bar{w}$.
	If $A^{\phi} >B$, again combined with $C>0$, the two limit functions (\ref{eq:aw_phi}) and (\ref{eq:bw+c}) intersect only once at a
	positive value $\hat{w}$. Let $w^*=\max\left\{ \underline{w} ,\min \{\hat{w},\bar{w}\}\right\}$.
	
	Since the convergences
	(\ref{eq:aw_phi})
	and (\ref{eq:bw+c})
	are uniform in $w_i$,
	for any $\varepsilon>0$
	there is $N$ with the property
	stated in Theorem \ref{thm:cdm}.
\end{proof}

Theorem \ref{thm:cdm}
implies that
the congressional district profile
makes the members
of groups with small weights
better off, compared with \textit{any} symmetric profile.
If
the weight
is an
increasing function
of the group size, it means that the congressional district profile is favorable for the members of small groups.\footnote{
As a special case, we cannot rule out the possibility where $w^*$ is equal to $\underline{w}$ so that $\phi^{\rm CD}$ is Pareto dominated by $\phi$.
However, this can only happen when $A^\phi$ is greater than $B$, which implies that the ratio $\pi_i(\phi^{\rm CD})/\pi_i(\phi)$ is decreasing with respect to $w_i$ (see (\ref{eq:aw_phi}) and (\ref{eq:bw+c})).
Thus, even in such a case, CDM favors groups with small weights in terms of relative comparison of payoffs.}

The
intuitive
reason
why the congressional
district profile
is advantageous
for small groups
is as follows.
Under this profile,
the ratio
of weights
cast
by the winner-take-all
rule ({i.e.}, $c/w_i$)
is higher
for small
groups
than for large
groups.
Therefore,
the rules used by the smaller groups 
are relatively close to the dominant strategy, inducing 
a relative advantage
for the
small groups.
We provide a numerical result in the following subsection using an example of the US Electoral College. 

\bigskip

In addition
to Theorem
\ref{thm:cdm},
we can also show that the
congressional district profile
{distributes payoffs
more equally}
than any symmetric profile does,
in the sense of
Lorenz dominance.
{A profile
of group
payoffs,
$\pi=(\pi_1,\cdots,\pi_n)$,
is said to
\textit{Lorenz dominate}
another profile, $\pi^\prime=(\pi_1^\prime,\cdots,
\pi_n^\prime)$,
if the share of payoffs
acquired by any bottom fraction
of groups is larger in $\pi$ than in $\pi^\prime$.}\footnote{Here Lorenz dominance is defined for distributions of payoffs among \textit{groups}.
If a micro foundation can be provided by taking the group payoff as the average payoff of the members who are ex ante symmetric (as in Remark \ref{rem:latent}), then the definition is equivalent to Lorenz dominance for distributions of \textit{individual} payoffs.}
Lorenz dominance,
whenever it occurs,
agrees
with equality
comparisons
by various 
inequality indices
including
the coefficient of variation,
the Gini coefficient,
the Atkinson index,
and the Theil index
(see \citet{FieldsFei1978} 
and \citet{Atkinson1970}).
To see why
the congressional district profile
is more equal than
any symmetric profile, recall
equations
(\ref{eq:aw_phi})
and (\ref{eq:bw+c})
in the proof
of Theorem
\ref{thm:cdm},
which assert
that when the number
of groups is large,
the payoff
for group $i$ is approximately
$A^{\phi} w_i$
for the symmetric
profile,
and it is approximately
$Bw_i+C$
for the congressional
district profile.
The constant
term $C>0$
for the congressional district
profile 
assures equal additions
to all groups' payoffs,
which results in
a more equal distribution
than when there is no such term.
More precisely,
we can prove the
following statement.
The proof is relegated
to the Appendix.
\label{rem:cdm}

\begin{theorem}
	Under Assumptions \ref{as:general}-\ref{as:neutral},	
	let us consider the
	payoff profile
	under the congressional district
	profile:
	$\pi\left(\phi^{\rm CD};n\right)=\left(\pi_i \left(\phi^{\rm CD};n\right) \right)_{i=1}^n$.
	Let $\phi$ be 
	any symmetric profile
	and
	$\pi\left(\phi;n\right)=\left(\pi_i \left(\phi;n\right)\right)_{i=1}^n$
	the payoff profile under $\phi$.
	For sufficiently
	large $n$,
	$\pi\left(\phi^{\rm CD};n\right)$
	Lorenz dominates
	$\pi\left(\phi;n\right)$.
	\label{thm:ineq}
\end{theorem}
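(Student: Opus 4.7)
The plan is to leverage the asymptotic formulas derived within the proof of Theorem~\ref{thm:cdm}, namely equations (\ref{eq:aw_phi}) and (\ref{eq:bw+c}), which yield $\sqrt{2\pi n}\,\pi_i(\phi;n) = A^\phi w_i + o(1)$ and $\sqrt{2\pi n}\,\pi_i(\phi^{\rm CD};n) = B w_i + C + o(1)$, with $A^\phi, B, C > 0$, where both approximations are uniform in $i$. Because the Lorenz curve is invariant under multiplication by a positive scalar, the comparison between $\pi(\phi;n)$ and $\pi(\phi^{\rm CD};n)$ reduces to comparing the rescaled vectors.

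First I would establish the dominance in the affine limit. Since both $(A^\phi w_i)_{i=1}^n$ and $(Bw_i+C)_{i=1}^n$ are strictly increasing in $w_i$, sorting either vector by its own values coincides with sorting by weight. Writing $w_{(1)}\le \cdots \le w_{(n)}$ for the sorted weights and $S_k := \sum_{i=1}^k w_{(i)}$, the Lorenz curve values at the point $k/n$ are $S_k/S_n$ for $(A^\phi w_i)$ and $(BS_k + kC)/(BS_n + nC)$ for $(Bw_i+C)$. A direct algebraic calculation gives
\[
\frac{BS_k + kC}{BS_n + nC} - \frac{S_k}{S_n} = \frac{C\,(kS_n - nS_k)}{(BS_n + nC)\,S_n} \geq 0,
\]
with strict inequality at every interior $k$ whenever the weights are not all equal, because sorting in increasing order gives $S_k/k \leq S_n/n$. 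By Assumption~\ref{as:w}(ii) combined with $\underline{w}<\bar{w}$, the limit distribution $G$ is non-degenerate, and the weak convergence $G_n \to G$ ensures that this Lorenz gap is bounded below by a strictly positive constant uniformly on every compact subset of the quantile interval $(0,1)$ for all sufficiently large $n$.

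Next I would transfer the dominance from the affine limit vectors to the actual payoff vectors using the uniform convergence. Because the $o(1)$ errors in the asymptotic expansions are uniform in $i$, the sorted cumulative payoff shares under $\phi$ and under $\phi^{\rm CD}$ differ from those of their affine limits by $o(1)$, uniformly across quantiles $k/n$. For $n$ large enough this perturbation is dominated by the positive Lorenz gap derived above, yielding strict dominance at every interior quantile, with weak equality at the endpoints $0$ and $1$ as required.

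The main obstacle is that at finite $n$ the ranking of groups by actual payoff need not agree with the ranking by weight, so the Lorenz curves may be built from slightly rearranged payoff vectors. I would address this by observing that a rearrangement induced by $o(1)$ perturbations changes each cumulative sum by at most $o(1)$, since swapping adjacent entries in a sorted list alters any cumulative sum by at most the gap between them. Because the limiting Lorenz gap is bounded away from zero on each compact interior subset of $(0,1)$ while the reordering error vanishes with $n$, the asymptotic strict dominance survives the perturbation and the theorem follows.
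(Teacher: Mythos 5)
Your proof follows the same overall architecture as the paper's: invoke the uniform asymptotic expansions (\ref{eq:aw_phi}) and (\ref{eq:bw+c}) from the proof of Theorem \ref{thm:cdm}, use scale-invariance of the Lorenz ordering to reduce the comparison to the limit vectors $(A^\phi w_i)_i$ versus $(Bw_i+C)_i$, and then transfer the limit dominance to finite $n$ by uniformity. The one genuine difference is the middle step: the paper establishes Lorenz dominance of $Bw+C$ over $A^\phi w$ by citing \citet[Proposition 2.3]{Moyes1994} (a decreasing ratio $f/g$ of positive nondecreasing functions implies Lorenz dominance of $f$), whereas you verify it by hand, computing the cumulative-share gap $C(kS_n-nS_k)/\bigl((BS_n+nC)S_n\bigr)\geq 0$ directly from $S_k/k\leq S_n/n$. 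Your version is self-contained and more elementary, at the cost of being specific to the affine case; the paper's citation buys generality (any decreasing ratio) with no computation. You also devote more care than the paper does to the transfer step (the possible reordering of groups by realized payoff, and the need for the limiting gap to dominate the $o(1)$ errors), which the paper dispatches in one sentence.

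One caveat on that transfer step: your lower bound on the Lorenz gap is stated only "on every compact subset of the quantile interval $(0,1)$," but the quantiles $k/n$ with $k$ of order $1$ or $n-k$ of order $1$ escape any fixed compact subset as $n$ grows, and there the gap itself is of order $1/n$ rather than bounded away from zero. To close this one should check that the perturbation of the cumulative shares is also of order $o(1)\cdot(k/n)$ (respectively $o(1)\cdot((n-k)/n)$ near the top), so that it is dominated by the gap quantile-by-quantile. This is a refinement the paper's own proof also glosses over, so it does not put you behind the published argument, but it is the place where your write-up, read literally, does not yet cover all interior quantiles.
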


\subsection{Computational Results}
\label{sec:comp}

The 
results in the previous subsection concern cases with a large number
of groups.
The question remains
as to whether
the conclusions 
obtained there
are also valid
for a finite number
of groups.
In this section,
we 
provide a numerical computation result using an example
of the US
presidential election.

There
are 50
states
and one federal district.
The 
weights
$(w_i)_{i=1}^{51}$
are the 
numbers
of electoral votes
assigned in the 2020 election.
The first and second
columns 
of
Table
\ref{t:pi}
show
the distribution
of weights
among the states.

We assume IAC* (Impartial Anonymous Culture*): 
the statewide popular vote margins $\Theta_i$ are independent and uniformly distributed on $[-1,1]$,
first introduced by
\cite{May1948}
and studied thoroughly by, for example, 
\cite{DeMouzon2019}. For any profile $\phi$, 
we can compute the
per capita payoff
for state $i$
via the formula:
\begin{equation}
\pi_i(\phi)
=2 \left( 0.5^{51} \right)
\int_{-1}^1\cdots\int_{-1}^1
\theta_i1_{A}(\theta_1,\cdots,\theta_{51})
d\theta_1\cdots d\theta_{51}
\label{eq:pi_u}
\end{equation}
where
$A=\left\{ (\theta_1,\cdots,\theta_{51}) \left| \sum_{j=1}^{51}w_j\phi_j(\theta_j)>0 \right. \right\}$.\footnote{It is easy to check that
	under the uniform distribution assumption,
	(\ref{eq:pi_u}) is equivalent to the expression
	in Lemma \ref{lem:pareto} (i).}

We consider 
four distinct profiles:
$\phi^{\rm WTA}$,
$\phi^{\rm PR}$,
$\phi^a$
with $a=102/538$,
and $\phi^{\rm CD}$
with coefficient $c=2$,
which are the winner-take-all profile,
the proportional profile,
a mixed profile,
and 
a congressional district
profile, respectively.
The parameter
$c=2$
of
the congressional
district
profile
is
the number
used
in Maine
and Nebraska,
corresponding to the two seats
assigned to each state
in the Senate.
The parameter
$a=102/538$
of the
mixed profile
is chosen
so that
the
proportion
of electoral votes
allocated on the winner-take-all basis
is the same
for all states,
and
the total number
of electoral
votes allocated in this way
is the same
as in the congressional
district profile.

We compute (\ref{eq:pi_u})
under these four profiles
by a Monte Carlo simulation with $10^{10}$ iterations. 
The results
are
summarized
in Tables
\ref{t:pi} and \ref{t:pi_ratio}.
Table \ref{t:pi}
shows the
per capita payoff
($\pi_i(\phi)$)
under the respective profiles.
Table
\ref{t:pi_ratio}
shows
the ratios
of per capita payoff
between different
profiles
($\pi_i(\phi)/\pi_i(\psi)$).
If the ratio is below $1$,
state $i$ prefers $\psi$ to $\phi$.

It follows from
Lemma \ref{lem:pareto} (ii) that
as the number $n$
of states increases,
the ratios $\pi_i \left( \phi^{\rm WTA}\right) / \pi_i \left( \phi^{\rm PR} \right)$
and $\pi_i \left(\phi^{a}\right) / \pi_i\left(\phi^{\rm PR}\right)$
converge to the respective
correlations
$\mathrm{Corr}[\Theta,\phi^{\rm WTA}(\Theta)]\approx0.866$
and $\mathrm{Corr}[\Theta,\phi^{a}(\Theta)]\approx
0.989$, where
the values are computed for $\Theta$
uniformly
distributed
on $[-1,1]$.
Table
\ref{t:pi_ratio}
indicates that
for the present
example with 50
states plus DC,
these ratios
are indeed close
to the respective
correlations,
which suggests
that convergence of the
$\pi$-ratios
is fairly quick.
In particular,
as 
expected by Theorem
\ref{thm:pareto},
the proportional profile
Pareto dominates
the winner-take-all profile
in the present case.
As suggested by Proposition \ref{prop:mixed}, 
all states
prefer the mixed profile $\phi^a$ to the winner-take-all profile,
and the proportional profile to $\phi^a$.

The ratios
$\pi_i(\phi^{\rm CD})/
\pi_i(\phi^{\rm PR})$
in Table \ref{t:pi_ratio} are consistent with the result in Theorem \ref{thm:cdm}.
Small 
states
prefer
the congressional district
profile to the proportional one.

In addition, 
the values
of $\pi_i(
\phi^{\rm CD})/
\pi_i(\phi^{\rm WTA})$
in the table
show that
the winner-take-all profile
is Pareto dominated by the congressional district profile,
and the
welfare
improvement
by switching
to the congressional district profile
is greater for small states
than for large states in terms of the ratio.

All of our numerical observations suggest the sensibility of the asymptotic results obtained in Section \ref{sec:asym} in the example of the US Electoral College.

\newpage

\begin{table}[h]
	\centering
	\caption{Estimated payoffs in the US presidential election, based on the apportionment in 2016, via Monte Carlo simulation with $10^{10}$ iterations. The estimated standard errors are in the range between $3.9$ and $4.1 \times 10^{-6}$.}
	\begin{tabular}{llllll} \hline\hline
		electoral & number & $\pi(\phi^{\rm WTA})$ & $\pi(\phi^{\rm PR})$ & $\pi(\phi^{a})$ & $\pi(\phi^{\rm CD})$ \\ 
		votes & of states & & & &  \\ \hline
3  & 8 & 0.0113 & 0.0133 & 0.0130 & 0.0167 \\
4  & 5 & 0.0151 & 0.0177 & 0.0174 & 0.0209 \\
5  & 3 & 0.0189 & 0.0221 & 0.0217 & 0.0251 \\
6  & 6 & 0.0226 & 0.0266 & 0.0261 & 0.0293 \\
7  & 3 & 0.0264 & 0.0310 & 0.0305 & 0.0335 \\
8  & 2 & 0.0302 & 0.0354 & 0.0348 & 0.0377 \\
9  & 3 & 0.0340 & 0.0399 & 0.0392 & 0.0419 \\
10 & 4 & 0.0378 & 0.0443 & 0.0436 & 0.0461 \\
11 & 4 & 0.0416 & 0.0488 & 0.0479 & 0.0503 \\
12 & 1 & 0.0454 & 0.0532 & 0.0523 & 0.0545 \\
13 & 1 & 0.0492 & 0.0577 & 0.0567 & 0.0587 \\
14 & 1 & 0.0531 & 0.0622 & 0.0611 & 0.0630 \\
15 & 1 & 0.0569 & 0.0666 & 0.0655 & 0.0672 \\
16 & 2 & 0.0607 & 0.0711 & 0.0699 & 0.0715 \\
18 & 1 & 0.0684 & 0.0801 & 0.0788 & 0.0800 \\
20 & 2 & 0.0762 & 0.0891 & 0.0877 & 0.0885 \\
29 & 2 & 0.1120 & 0.1303 & 0.1284 & 0.1275 \\
38 & 1 & 0.1494 & 0.1729 & 0.1706 & 0.1677 \\
  55 & 1 & 0.2356 & 0.2614 & 0.2615 & 0.2507
		\\ \hline
	\end{tabular}
	\label{t:pi}
\end{table}

\newpage

\begin{table}[h]
	\caption{Ratios between payoffs.}
	\centering
	\begin{tabular}{llllll} \hline\hline
		electoral & number & $\frac{{\pi}(\phi^{\rm WTA})}
		{{\pi}(\phi^{\rm PR})}$ & 
		$\frac{{\pi}(\phi^{a})}{{\pi}(\phi^{\rm PR})}$ & 
		$\frac{{\pi}(\phi^{\rm CD})}{{\pi}(\phi^{\rm PR})}$ &
		$\frac{{\pi}(\phi^{\rm CD})}{{\pi}(\phi^{\rm WTA})}$\\ 
		votes & of states & & & &\\ \hline
		3  & 8 & 0.852 & 0.982 & 1.260 & 1.479 \\
		4  & 5 & 0.852 & 0.982 & 1.182 & 1.387 \\
		5  & 3 & 0.852 & 0.982 & 1.134 & 1.331 \\
		6  & 6 & 0.852 & 0.982 & 1.103 & 1.294 \\
		7  & 3 & 0.852 & 0.982 & 1.080 & 1.268 \\
		8  & 2 & 0.852 & 0.982 & 1.064 & 1.248 \\
		9  & 3 & 0.852 & 0.982 & 1.050 & 1.232 \\
		10 & 4 & 0.853 & 0.983 & 1.040 & 1.220 \\
		11 & 4 & 0.853 & 0.983 & 1.031 & 1.210 \\
		12 & 1 & 0.853 & 0.983 & 1.024 & 1.201 \\
		13 & 1 & 0.853 & 0.983 & 1.018 & 1.194 \\
		14 & 1 & 0.853 & 0.983 & 1.013 & 1.187 \\
		15 & 1 & 0.854 & 0.983 & 1.009 & 1.181 \\
		16 & 2 & 0.854 & 0.983 & 1.005 & 1.177 \\
		18 & 1 & 0.854 & 0.983 & 0.998 & 1.168 \\
		20 & 2 & 0.855 & 0.983 & 0.993 & 1.161 \\
		29 & 2 & 0.859 & 0.985 & 0.978 & 1.138 \\
		38 & 1 & 0.864 & 0.987 & 0.970 & 1.122 \\
		55 & 1 & 0.901 & 1.000 & 0.959 & 1.064
		\\ \hline
	\end{tabular}
	\label{t:pi_ratio}
\end{table}

\section{Concluding Remarks}

This paper shows that the decentralized choice of the weight allocation rule in representative democracy constitutes a Prisoner's Dilemma: the winner-take-all rule is a dominant strategy for each group, whereas the Nash equilibrium is Pareto dominated. 
Each group has an incentive to put its entire weight on the alternative supported by the majority of its members in order to reflect their preferences in the social decision, although 
such a distortion by each group prevents
efficient aggregation of the preferences of the society as a whole.

We also develop an asymptotic technique and show that the proportional rule Pareto dominates every other symmetric profile when the number of the groups is sufficiently large.

Our model may provide explanations for the phenomena that we observe in existing institutions of collective decision making. In the United States Electoral College, the rule used by the states varied in early elections until it converged by 1832 to the winner-take-all rule, which has remained dominantly employed by nearly all states since then. In many parliamentary voting situations, we often observe parties and/or factions forcing their members to align their votes in order to maximally reflect their preferences in the legislative decision, although some members may disagree with the party's alignment. The voting outcome obtained by the winner-take-all rule may fail to efficiently aggregate preferences, as observed in the discrepancy between the electoral result and the national popular vote winner in the US presidential elections in 2000 and 2016. 
Party discipline or factional voting may also cause welfare loss when each group pushes their votes maximally toward their ideological goals, failing to reflect all of their members' preferences in the legislative decision.

The Winner-Take-All Dilemma tells us that the society should call for some device other than each group's unilateral effort, in order to obtain a socially preferable outcome. As we see in the failure of various attempts to modify or abolish the winner-take-all rule, such as the ballot initiative for an amendment to the State Constitution in Colorado in 2004, each state has no incentive to unilaterally deviate from the equilibrium. 
The National Popular Vote Interstate Compact is a well-suited example of a coordination device (\cite{Koza2013}). As it comes into effect only when the number of electoral votes attains the majority, each state does not suffer from the payoff loss by a unilateral (or coalitional) deviation until a sufficient level of coordination is attained. The emergence of such an attempt is coherent with the insights obtained in this paper that the game is a Prisoner's Dilemma, and a coordination device is necessary for a Pareto improvement.


We have assumed that social decisions are binary. 
When the model is applied to a yes-or-no collective decision, as in the case of legislative voting, the assumption of a binary decision can be justified on the grounds that choices are made between the status quo and a proposal. 
However, such an argument abstracts away the political process that gives rise to the particular choice of the proposal or the endogenous party formation. 
The analysis would be inherently different in models with three or more alternatives, primarily because group-level incentives for aggregation rules depend heavily on individual-level incentives for strategic voting. 
Therefore, in order to provide further analysis, additional assumptions are needed about factors such as individual voting behavior or the alternative space.
A successful analysis is provided by \cite{Kurz2017}, for example, in which a one-dimensional structure is introduced in the set of alternatives. 
But a thorough analysis with three or more alternatives is beyond the scope of this paper and we leave it for future research.

\phantomsection
\addcontentsline{toc}{section}{Appendix}

\section*{Appendix}

\renewcommand{\thesubsection}{A\arabic{subsection}}
\setcounter{subsection}{0}


\subsection{Proof
	of Lemma \ref{lem:char}}

	(i)	
	Let $D$
	be the set
	of all SCFs,
	and $\pi(D)=\{
	(\pi_i(d))_{i=1}^n|
	d\in D
	\}$
	the set
	of (ex ante)
	payoff vectors
	generated by
	SCFs.
	Then
	$\pi(D)$
	is convex.\footnote{
		This is because
		for any two
		SCFs $d$ and $d^\prime$,
		any convex
		combination
		of the
		payoff vectors
		corresponding to
		$d$ and $d^\prime$
		can be realized
		as a compound
		SCF that randomizes
		between
		$d$ and $d^\prime$.
	}
	Let
	$\mathrm{Pa}\,
	(\pi(D))$
	be the Pareto
	frontier of $\pi(D)$,
	{i.e.}, the set
	of payoff profiles
	$u\in\pi(D)$
	for which
	there is no
	$d^\prime\in\pi(D)$
	such that
	$u_i^\prime\geq
	u_i$
	for all $i$
	and $u_i^\prime>
	u_i$ for some
	$i$.

	We divide the proof of (i)
	into two steps.

	\bigskip
	
	\noindent
	\textbf{Step 1.}
	\textit{Let
		$\lambda\in
		\mathbb{R}_+^n\setminus
		\{\textbf{0}\}$.
		Then
		the unique solution
		to
		the following maximization
		problem (\ref{eq:max})
		is the payoff vector
		$u^\lambda:=
		\left(\pi_i(d^\lambda)\right)_{i=1}^n$
		under a cardinal $\lambda$-weighted
		majority rule $d^\lambda$.}\footnote{Recall that
	a weighted majority
	rule with a given weight vector
	is unique only up to differences
	on a set of measure zero, inducing
	the same payoffs.}
	\begin{equation}
	\max_{u\in \pi(D)}
	\sum_{i=1}^n\lambda_i
	u_i.
	\label{eq:max}
	\end{equation}
	\textit{Moreover,
		an SCF $d$
		satisfies
		$\left(\pi_i(d)\right)_{i=1}^n=
		u^\lambda$
		if and only
		if $d$ is a $\lambda$-weighted majority
		rule.}
	
	Let $d\in D$ be any SCF.
	Then 
	\begin{equation}
	\sum_{i=1}^n
	\lambda_i\pi_i(d)
	=\sum_{i=1}^n
	\lambda_i
	\mathbb{E}\left[
	\Theta_id(
	\Theta)
	\right]
	=
	\mathbb{E}\left[
	d(\Theta)\sum_{i=1}^n\lambda_i
	\Theta_i
	\right].
	\label{eq:weighted_sum}
	\end{equation}
	Since 
	$\Theta$
	is absolutely
	continuous,
	and so
	$\sum_{i=1}^n
	\lambda_i\Theta_i
	\neq0$
	almost surely,
	$d$ maximizes
	(\ref{eq:weighted_sum})
	if and only
	if
	$d(\Theta)
	=\mathrm{sgn\,}
	\sum_{i=1}^n\lambda_i\Theta_i$
	almost surely.
	That is, 
	\begin{equation}
	\text{$d$
		maximizes (\ref{eq:weighted_sum}) 
		$\Leftrightarrow$
		$d$ is a $\lambda$-weighted
		majority rule.}
	\label{eq:max_equiv}
	\end{equation}
	This implies the
	first sentence
	of Step 1.
	Result
	(\ref{eq:max_equiv})
	also implies that 
	if $d$ is not
	a $\lambda$-weighted
	majority rule,
	then
	$\pi_i(d)\neq
	\pi_i(d^\lambda)$
	for some $i$,
	which proves
	the ``only if''
	part of 
	the second sentence
	of Step 1.
	The ``if'' part
	is trivial.

	\bigskip
	\noindent
	\textbf{Step 2.}
	\textit{A payoff vector $u\in \pi(D)$ is in the Pareto 
		frontier  $\mathrm{Pa}\,
		(\pi(D))$
		if and only
		if there exists
		$\lambda\in
		\mathbb{R}_+^n
		\setminus\{\textbf{0}\}$
		such that
		$u=(\pi_i(d^\lambda))_{i=1}^n
		=:u^\lambda$,
		where $d^\lambda$
		is a $\lambda$-weighted majority
		rule.}
	
	Since $\pi(D)$
	is convex, we can apply
	\citet[
	Proposition 16.E.2]{MWG1995}
	to show the
	``only if''
	part of
	Step 2.

	To show the
	``if'' part,
	suppose on the contrary
	that
	$u^\lambda\notin
	\mathrm{Pa}\,
	(\pi(D))$
	for some  $\lambda\in\mathbb{R}_+^n
	\setminus\{\textbf{0}\}$.
	Then there
	exists
	$u\in \pi(D)$
	such that 
	$u\neq u^\lambda$
	and
	$u_i\geq u^\lambda_i$
	for all $i$.
	Then
	$\sum_{i=1}^n\lambda_i
	u_i\geq
	\sum_{i=1}^n\lambda_i
	u_i^\lambda$.
	This
	contradicts
	the fact that $u^\lambda$
	is the unique
	solution to problem
	(\ref{eq:max}).

	\bigskip
	(ii)
	This follows from
	the trivial fact that the set of SCFs $d_\phi$
	induced by 
	profiles $\phi$ that are equivalent to
	a generalized proportional profile
	coincides with the
	set of all weighted majority rules.\footnote{
	Indeed, if $\phi$ is equivalent
	to a generalized proportional profile
	with the vector of coefficients
	$\lambda\in[0,1]^n\setminus\{\textbf{0}\}$,
	the induced SCF $d_\phi$ is a  $\mu$-weighted majority
	rule, where the weights are defined by
	$\mu_i:=w_i\lambda_i$;
	conversely, if $d$
	is a 
	$\mu$-weighted majority rule,
	then $d=d_\phi$ for some profile $\phi$
	that is
	equivalent to the generalized proportional
	profile with coefficients
	$\lambda_i:=\frac{\mu_i}{w_i}$.}\qed

\subsection{Proof
	of Part (i) of Lemma \ref{lem:pareto}}

We prove the statement for group 1.
Let 
$\pi_1(\phi;n|\theta_1)$
be the conditional 
expected payoff for group 1
given that the group-wide margin is $\Theta_1=\theta_1$,
which by (\ref{eq:conditional_welfare})
is:
\begin{equation*}
\pi_1(\phi;n|\theta_1)=
\theta_1
(\mathbb{P}\{w_1\phi(\theta_1)+S_{\phi_{-1}}>0\}
-\mathbb{P}
\{w_1\phi(\theta_1)+S_{\phi_{-1}}<0\}).
\end{equation*}
Since $S_{\phi_{-1}}$
is symmetrically distributed,
the second
probability
can be written as $\mathbb{P}\{-w_1\phi(\theta_1)+S_{\phi_{-1}}>0\}$.
Thus,
for $\theta_1\in[0,1]$,
the above expression
equals
\[
\pi_1(\phi;n|\theta_1) =
\theta_1
\mathbb{P}\{-w_1\phi(\theta_1)<S_{\phi_{-1}}\leq w_1\phi(\theta_1)\}.
\]
By symmetry,
twice
the integral
of this
expression
over $\theta_1\in[0,1]$
(instead of $[-1,1]$)
equals the unconditional
expected payoff $\pi_1(\phi;n)$,
which proves part (i) of Lemma \ref{lem:pareto}.\qed

\subsection{Local Limit Theorem}

We quote a version of the Local Limit Theorem
shown in
\citet{Mineka1970}.
We will use it in the proof of
part (ii) of Lemma \ref{lem:pareto}.

\begin{llt}
	(\citet[Theorem 1]{Mineka1970})
	Let $(X_i)$ be a sequence of independent random variables
	with mean 0 and variances $0<\sigma_i^2<\infty$.
	Write $F_i$ for the distribution of $X_i$.
	Write also $S_n=\sum_{i=1}^nX_i$ and $s_n^2=\sum_{i=1}^n\sigma_i^2$.
	Suppose the sequence $(X_i)$ satisfies the following conditions:
	
	\begin{itemize}
		\item[($\alpha$)] There exists $\bar{x}>0$ and $c>0$ such that
		for all $i$,
		\[
		\frac{1}{\sigma_i^{2}}\int_{|x|<\bar{x}}x^2dF_i(x)>c.
		\]
		
		\item[($\beta$)] 
		Define the set
		\[
		A(t,\varepsilon)=
		\{
		x| \ |x|<\bar{x}\text{ and } |xt-\pi m|>\varepsilon
		\text{ for all integers $m$ with $|m|<\bar{x}$}
		\}.
		\]
		Then,
		for some bounded
		sequence $(a_i)$ such that
		$\inf_i\mathbb{P}\{|X_i-a_i|<\delta\}>0$ for all $\delta>0$,
		and for any $t\neq0$, there exists $\varepsilon>0$ such that
		\[
		\frac{1}{\log s_n}
		\sum_{i=1}^n
		\mathbb{P}\{X_i-a_i\in A(t,\varepsilon)\}\to \infty.
		\]
		
		\item[($\gamma$)]
		(Lindeberg's condition.)
		For any $\varepsilon>0$,
		\[
		\frac{1}{s_n^2}\sum_{i=1}^n\int_{|x|/s_n>\varepsilon}
		x^2dF_i(x)\to0.
		\]
	\end{itemize}
	\noindent
	Under conditions ($\alpha$)-($\gamma$), if $s_n^2\to\infty$, we have
	\begin{equation}
	\sqrt{2\pi s_n^2}
	\mathbb{P}\{S_n\in(a,b]\}\to
	b-a.\footnote{The original conclusion 
		of Theorem 1 in \citet{Mineka1970} is stated in terms of
		the open interval $(a,b)$.
		Applying the theorem
		to $(a,b+c)$ and $(b,b+c)$
		and then taking the difference gives the result for
		$(a,b]$. In addition,
		the original statement
		allows for cases where $s_n^2$
		does not go to infinity,
		and also mentions uniform convergence.
		These considerations are not necessary
		for our purpose, so we omit them.}
	\label{eq:llt_mineka}
	\end{equation}
	\label{llt:mineka}
\end{llt}

\subsection{Proof
	of Lemma \ref{lem:cdm}}

\noindent
\textbf{Preliminaries.}
We prove the lemma
for group 1.
In the proof, we use the notation of
LLT.
Let 
\[
X_i:=w_i\phi(\Theta_i,w_i),\,i=1,2,\cdots,
\]
and $S_n:=\sum_{i=1}^n X_i$.
Then
$X_i$ has mean 0 and variance $\sigma_i^2:=
w_i^2\mathbb{E}[\phi(\Theta,w_i)^2]$,
and so the partial sum of variances is  $s_n^2:=\sum_{i=1}^nw_i^2\mathbb{E}[\phi(\Theta,
w_i)^2]$,
where
$\Theta$ represents a random variable
that has the same distribution $F$
as $\Theta_i$.

Define the event
\[
\Omega_n(\theta_1,w_1)=\left\{-w_1\phi(\theta_1,w_1)<
\sum_{i=2}^n
X_i
\leq w_1\phi(\theta_1,w_1)\right\}.
\]

We divide the proof
into several claims.
Claims \ref{claim:s_n}-\ref{claim:beta}
show that
the sequence
$(X_i)$
defined above satisfies
the conditions
of the Local Limit Theorem (LLT)
in Section A4.
Claim \ref{claim:lem_cdm}
applies the LLT
to complete
the proof of Lemma \ref{lem:cdm}.

\renewcommand{\theclaima}{5.\arabic{claima}}

%

\begin{claima}
	$\frac{s_n^2}{n}\to
	\int_{\underline{w}}^{\bar{w}}
	w^2\mathbb{E}[
	\phi(\Theta,w)^2]
	dG(w)$.
	\label{claim:s_n}
\end{claima}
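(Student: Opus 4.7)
My plan is to recognize the finite sum as an integral against the empirical distribution $G_n$ and then pass the limit using weak convergence. Specifically, define $f:[\underline{w},\bar{w}]\to\mathbb{R}$ by $f(w):=w^2\mathbb{E}[\phi(\Theta,w)^2]$. Then
\[
\frac{s_n^2}{n}=\frac{1}{n}\sum_{i=1}^n f(w_i)=\int_{\underline{w}}^{\bar{w}} f(w)\,dG_n(w),
\]
and the claim becomes $\int f\,dG_n\to\int f\,dG$. By Assumption \ref{as:w}(ii), $G_n\Rightarrow G$ weakly, so by the Portmanteau theorem it suffices to show that $f$ is bounded and continuous on $[\underline{w},\bar{w}]$.

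Boundedness is immediate: since any rule $\phi$ takes values in $[-1,1]$ by Definition \ref{def:rule}, we have $|\phi(\theta,w)|\le 1$ and hence $0\le f(w)\le \bar{w}^2$. For continuity, I would use the explicit structural form provided by Assumption \ref{as:fnform}: writing $w\phi(\theta,w)=h_1(w)h_2(\theta)+h_3(w)\,\mathrm{sgn}\,\theta$, and squaring, gives
\[
w^2\phi(\theta,w)^2=h_1(w)^2 h_2(\theta)^2+2h_1(w)h_3(w)h_2(\theta)\,\mathrm{sgn}\,\theta+h_3(w)^2,
\]
where I used $(\mathrm{sgn}\,\theta)^2=1$ almost surely (which holds because $\Theta$ is absolutely continuous by Assumption \ref{as:general}, so $\Theta=0$ with probability zero). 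Taking expectations over $\Theta$,
\[
f(w)=h_1(w)^2\,\mathbb{E}[h_2(\Theta)^2]+2h_1(w)h_3(w)\,\mathbb{E}[h_2(\Theta)\,\mathrm{sgn}\,\Theta]+h_3(w)^2,
\]
so $f$ is a polynomial expression in $h_1(w)$ and $h_3(w)$ with constant coefficients (the three expectations are finite because $|h_2(\Theta)|\le 1/\underline{w}\cdot(|h_1|_\infty+|h_3|_\infty)$ is bounded, using $|\phi|\le 1$). Continuity of $f$ then follows from continuity of $h_1$ and $h_3$ on $[\underline{w},\bar{w}]$, which holds in the cases of interest (symmetric profiles with $h_1(w)=w$, $h_3(w)=wr$, and the congressional district profile with $h_1(w)=w-c$, $h_3(w)=w$).

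The main obstacle is the continuity step: Assumption \ref{as:fnform} only guarantees that $h_1$ is bounded, whereas what is needed is continuity of $w\mapsto h_1(w)^2$ and $w\mapsto h_1(w)h_3(w)$. The cleanest way around this is to verify the specific structural forms used for the symmetric and congressional district profiles directly, where $h_1$ is in fact continuous (indeed affine). Once continuity of $f$ is established, the Portmanteau theorem closes the argument in one line. The remainder of the proof of Lemma \ref{lem:cdm} will then build on this identification of the limiting variance constant $\int w^2\mathbb{E}[\phi(\Theta,w)^2]\,dG(w)$.
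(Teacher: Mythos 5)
Your proof is correct and follows essentially the same route as the paper, whose entire argument for this claim is the one-line observation that $(\sigma_i^2)$ is bounded and $G_n\Rightarrow G$; you have simply made explicit the step it leaves implicit, namely that $s_n^2/n=\int f\,dG_n$ with $f(w)=w^2\mathbb{E}[\phi(\Theta,w)^2]$ and that Portmanteau requires $f$ to be bounded and continuous. Your flagging of the continuity of $f$ as the real content (and its verification via the affine $h_1,h_3$ in the symmetric and congressional district cases) is a legitimate refinement of the paper's terse justification, not a departure from it.
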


\begin{proof}[Proof of Claim \ref{claim:s_n}]
	This holds since sequence
	$(\sigma_i^2)$ is bounded
	and
	the statistical distribution $G_n$ induced by
	$(w_i)_{i=1}^n$ converges weakly
	to $G$.
\end{proof}

\begin{claima}
	Conditions ($\alpha$) and ($\gamma$)
	in the LLT hold.
	\label{claim:alpha_gamma}
\end{claima}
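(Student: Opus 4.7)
The plan is to exploit the uniform boundedness of the summands. Since each rule takes values in $[-1,1]$ and $w_i\leq\bar{w}$, the random variables $X_i=w_i\phi(\Theta_i,w_i)$ satisfy $|X_i|\leq\bar{w}$ almost surely for every $i$. This boundedness will reduce both conditions to near-triviality, provided we first establish the prerequisite $s_n^2\to\infty$, which is where the only real work lies.

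For Condition ($\alpha$), I would simply choose any $\bar{x}>\bar{w}$. Then $\{|X_i|<\bar{x}\}$ has probability one, so $\int_{|x|<\bar{x}}x^2\,dF_i(x)=\sigma_i^2$, and the displayed ratio equals $1$ for every $i$. Any $c\in(0,1)$ then works uniformly.

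For Condition ($\gamma$), I would use $|X_i|\leq\bar{w}$ again: for any fixed $\varepsilon>0$, once $s_n>\bar{w}/\varepsilon$, the event $\{|X_i|/s_n>\varepsilon\}$ is $\mathbb{P}$-null for every $i\leq n$, so each inner integral vanishes and the whole Lindeberg sum equals zero for all sufficiently large $n$, not merely tending to zero. This step therefore requires only the preliminary fact that $s_n\to\infty$, which by Claim \ref{claim:s_n} amounts to strict positivity of $\int_{\underline{w}}^{\bar{w}}w^2\mathbb{E}[\phi(\Theta,w)^2]\,dG(w)$.

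The only genuinely non-routine point, and the one I expect to require care, is this positivity claim. Under Assumption \ref{as:fnform}, $w\phi(\theta,w)=h_1(w)h_2(\theta)+h_3(w)\,\text{sgn}(\theta)$. For any fixed $w$ with $h_3(w)\neq 0$, the relation $h_1(w)h_2(\Theta)+h_3(w)\,\text{sgn}(\Theta)=0$ cannot hold almost surely: since $\text{sgn}(\Theta)$ takes only the two values $\pm 1$ while the support of $h_2(\Theta)$ contains $0$, the equation would force $h_1(w)=0$, leaving $h_3(w)\,\text{sgn}(\Theta)=0$ a.s., which contradicts $h_3(w)\neq 0$. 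Hence $\mathbb{E}[\phi(\Theta,w)^2]>0$ whenever $h_3(w)\neq 0$. Since $h_3$ is continuous and non-constant (Assumption \ref{as:fnform}(iii)), the set $\{w:h_3(w)\neq 0\}$ is a non-empty open subset of $[\underline{w},\bar{w}]$, and the full-support condition on $G$ (Assumption \ref{as:w}(ii)) makes its $G$-measure strictly positive. The integral is therefore strictly positive, $s_n^2\to\infty$ follows, and both conditions are verified.
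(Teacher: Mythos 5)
Your proof is correct and follows essentially the same route as the paper, which likewise disposes of ($\alpha$) and ($\gamma$) by noting that $(X_i)$ is bounded (taking $\bar{x}>\bar{w}$) and that $s_n^2\to\infty$. Your additional verification that $\int_{\underline{w}}^{\bar{w}}w^2\mathbb{E}[\phi(\Theta,w)^2]\,dG(w)>0$ is a welcome piece of care that the paper leaves implicit (relying on Claim \ref{claim:s_n} and the exclusion of the trivial profile), and your argument for it is sound.
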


\begin{proof}[Proof of Claim \ref{claim:alpha_gamma}]
	This immediately
	follows from the fact
	that sequence $(X_i)$ is bounded and
	$s_n^2\to\infty$.
	In particular,
	it is enough to define $\bar{x}$
	to be any finite number
	greater than $\bar{w}$.
\end{proof}

\begin{claima}
	Condition ($\beta$) in LLT holds.
	\label{claim:beta}
\end{claima}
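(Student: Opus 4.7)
The plan is to set $a_i := h_3(w_i)$ and verify each piece of condition ($\beta$): boundedness of $(a_i)$, the small-ball bound $\inf_i \mathbb{P}\{|X_i - a_i| < \delta\} > 0$, and the main sum/$\log s_n$ assertion. Boundedness is automatic because $h_3$ is continuous on the compact interval $[\underline{w}, \bar{w}]$. For the small-ball bound, the natural move is to restrict to $\{\Theta_i > 0\}$: the second summand of
\[
X_i - a_i = h_1(w_i) h_2(\Theta_i) + h_3(w_i)(\text{sgn}\,\Theta_i - 1)
\]
vanishes, so $|X_i - a_i| \leq H |h_2(\Theta_i)|$ with $H := \sup|h_1|$. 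Since $\Theta$ is symmetric and $h_2$ is odd, $h_2(\Theta)$ is symmetric; combined with $0$ lying in its support, $\mathbb{P}\{|h_2(\Theta)| < \delta/H\}>0$. Splitting evenly by the sign of $\Theta_i$ then yields the uniform lower bound $\mathbb{P}\{|X_i - a_i|<\delta\} \geq \tfrac{1}{2}\mathbb{P}\{|h_2(\Theta)|<\delta/H\}>0$ independent of $i$, since all $\Theta_i$ share the distribution $F$.

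For the main assertion, the key observation is that on $\{\Theta_i < 0\}$ we have $X_i - a_i = -2h_3(w_i) + h_1(w_i)h_2(\Theta_i)$, which is close to $-2h_3(w_i)$ whenever $|h_2(\Theta_i)|$ is small. Since $h_3$ is continuous and non-constant, $h_3([\underline{w}, \bar{w}])$ is a non-degenerate interval by the Intermediate Value Theorem; therefore the ``bad'' set $B = \{w : -2h_3(w)t = \pi m \text{ for some integer } m \text{ with } |m|<\bar{x}\}$ is a finite union of closed level sets of $h_3$, hence a proper closed subset of $[\underline{w}, \bar{w}]$. Its complement $U_0$ is nonempty and open, so $G(U_0)>0$ by the full-support hypothesis on $G$. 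Defining
\[
U_\eta := \{w \in [\underline{w},\bar{w}] : |-2h_3(w)t - \pi m| > \eta \text{ for all integers } m \text{ with } |m|<\bar{x}\},
\]
the sets $U_\eta$ increase to $U_0$ as $\eta\downarrow 0$, so $G(U_{2\varepsilon}) > 0$ for some small $\varepsilon > 0$.

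Next, for any $w_i \in U_{2\varepsilon}$ the triangle inequality shows that the event $\{\Theta_i<0\}\cap\{|h_2(\Theta_i)| < \varepsilon/(H|t|)\}$ forces $X_i - a_i \in A(t,\varepsilon)$, provided $\bar{x}$ is chosen larger than any possible value of $|X_i - a_i|$. The probability of this event is some $p > 0$ independent of $i$ by the odd/symmetric argument already used in the small-ball step. Portmanteau's theorem applied to the open set $U_{2\varepsilon}$ gives $\liminf_n G_n(U_{2\varepsilon}) \geq G(U_{2\varepsilon}) > 0$, so $\sum_{i=1}^n \mathbb{P}\{X_i - a_i \in A(t,\varepsilon)\} \gtrsim pn$, while $\log s_n \sim \tfrac{1}{2}\log n$ by Claim \ref{claim:s_n}. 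The ratio therefore tends to $\infty$, which completes ($\beta$).

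The main obstacle is the quantitative coupling of the $h_3$- and $h_2$-parts: the buffer $2\varepsilon$ in the definition of $U_{2\varepsilon}$ must absorb the $h_2$-perturbation $h_1(w_i)h_2(\Theta_i)$, whose contribution to $|(X_i - a_i)t - \pi m|$ is at most $H\cdot\varepsilon/(H|t|)\cdot|t| = \varepsilon$. A secondary subtlety is the degenerate case $h_1\equiv 0$ (e.g.\ the pure winner-take-all profile), but then the $h_2$-term disappears and the argument simplifies, relying only on the spread of $\{-2h_3(w_i)\}_i$ among a positive $G$-mass of the $w_i$.
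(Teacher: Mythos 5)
Your proof is correct and follows essentially the same route as the paper's: the same choice $a_i=h_3(w_i)$, the same restriction to $\{\Theta_i>0\}$ for the small-ball bound, the same use of $\{\Theta_i<0\}$ together with the support of $h_2(\Theta)$ containing $0$ to place $X_i-a_i$ near $-2h_3(w_i)$, and the same appeal to non-constancy of $h_3$, full support of $G$, and $s_n\asymp\sqrt{n}$. The only (cosmetic) difference is that you identify the good set of weights as the open complement of a finite union of level sets and invoke Portmanteau, whereas the paper exhibits an explicit subinterval $[\underline{v},\bar{v}]$ with the same effect.
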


\begin{proof}[Proof of Claim \ref{claim:beta}]
	Recall that
	$\phi$
	has the form
	\[
	w_i\phi(\theta_i,w_i)=
	h_1(w_i)h_2(\theta_i)+h_3(w_i){\rm\,sgn\,}\theta_i.
	\]

	Let $a_i=h_3(w_i)$.
	We first check that
	the sequence $(a_i)$
	satisfies
	the requirements in condition ($\beta$).
	First, $(a_i)$
	is bounded since $h_3$
	is bounded.
	Now,
	for any $i$ and any $\delta>0$,
	\begin{equation*}
	\begin{split}
	\mathbb{P}\{|X_i-a_i|<\delta\}
	&\geq\mathbb{P}\{|X_i-a_i|<\delta
	\text{ and }
	\Theta_i>0\}\\
	&\geq
	\mathbb{P}\left\{\left|w_i\phi(\Theta_i,w_i)-
	h_3(w_i){\rm\,sgn\,}
	\Theta_i
	\right|<\delta
	\text{ and }
	\Theta_i>0
	\right\}\\
	&=
	\mathbb{P}\{|h_1(w_i)h_2(\Theta_i)|<\delta
	\text{ and }\Theta_i>0\}.
	\end{split}
	\end{equation*}
	Letting $\bar{h}_1>0$
	be an upper bound
	of $|h_1|$
	and $\Theta$
	a random variable
	distributed
	as $\Theta_i$,
	the last
	expression
	has the following
	lower bound independent
	of $i$:
	\[
	\mathbb{P}\{
	|h_2(\Theta)|<
	\delta/\bar{h}_1
	\text{ and }
	\Theta>0
	\}>0,
	\]
	which
	is positive 
	by the assumptions
	on $h_2$
	and on the distribution
	of $\Theta$.

	Next we
	check the limit condition
	in ($\beta$).
	Recall that
	$A(t,\varepsilon)$
	is the union of
	intervals
	\[
	\left(\frac{\pi m+\varepsilon}{|t|},\,\frac{\pi(m+1)-\varepsilon}{|t|}\right),\,
	m=0,\pm1,\pm2,\cdots,
	\]
	restricted to
	$(-\bar{x},\bar{x})$,
	where we can choose $\bar{x}$ to
	be any number greater than $\bar{w}$.
	To prove the limit
	condition in ($\beta$),
	it therefore suffices to
	verify that one such interval
	contains $X_i-a_i$
	with probability bounded away from zero,
	for all groups $i$ in some sufficiently
	large subset of groups.
	To do this,
	note that if $\Theta_i<0$,
	then
	$X_i-a_i=h_1(w_i)
	h_2(\Theta_i)
	-2h_3(w_i)$.
	The assumptions on
	$h_2$ and on the distribution
	of $\Theta$
	imply that
	for 
	any $\eta>0$,
	there exists a set
	$O_\eta\subset
	[-1,0]$
	with $\mathbb{P}\{\Theta\in
	O_\eta\}>0$
	such that
	if $\Theta\in O_\eta$
	then 
	$|h_2(\Theta)|\leq\eta$.
	Therefore,
	\[
	\Theta_i\in O_\eta
	\text{ $\Rightarrow$ }
	X_i-a_i\in T_{w_i,\eta},
	\]
	where
	\[
	T_{w_i,\eta}:=
	[-2h_3(w_i)-\eta h_1(w_i),\,
	-2h_3(w_i)+\eta h_1(w_i)].
	\]
	Since $h_1$
	is bounded,
	we can make  $T_{w_i,\eta}$
	an arbitrarily small
	interval around $-2h_3(w_i)$
	by letting
	$\eta>0$ be
	sufficiently small.
	Moreover,
	since $h_3$
	is continuous
	and not a
	constant,
	we can find
	a sufficiently
	small interval
	$[\underline{v},\bar{v}]
	\subset[\underline{w},\bar{w}]$
	with
	$\underline{v}<
	\bar{v}$
	such that
	if $w_i\in[\underline{v},\bar{v}]$,
	then
	$-2h_3(w_i)$
	is
	between,
	and bounded
	away from,
	$\frac{\pi m}{|t|}$
	and $\frac{\pi(m+1)}{|t|}$
	for some integer $m$.
	Fix such 
	an interval
	$[\underline{v},\bar{v}]$ 
	and define
	\[
	I:=\{i | w_i\in[\underline{v},\bar{v}]\}.
	\]
	Then,
	for sufficiently
	small $\eta>0$
	and $\varepsilon>0$,
	we have
	$T_{w_i,\eta}
	\subset
	A(t,\varepsilon)$
	for all
	$i\in I$.
	Fixing such $\eta>0$
	and $\varepsilon>0$,
	it follows that
	\[
	\Theta_i\in O_\eta
	\text{ and }
	i\in I
	\text{ $\Rightarrow$ }
	X_i-a_i\in
	A(t,\varepsilon).
	\]
	This implies that
	\[
	\mathbb{P}\{X_i-a_i\in A(t,\varepsilon)\}
	\geq
	\mathbb{P}\{
	\Theta\in O_\eta
	\}=:p>0
	\text{ for all }
	i\in I,
	\]
	and hence
	\[
	\frac{1}{\log s_n}
	\sum_{i=1}^n
	\mathbb{P}\{X_i-a_i\in A(t,\varepsilon)\}
	\geq
	\frac{n}{\log s_n}
	\cdot
	\frac{\sharp\{i\in I |
		i\leq n\}}{n}
	\cdot
	p.
	\]
	As $n\to\infty$,
	the first
	factor
	on the right-hand
	side tends
	to
	$\infty$
	since $s_n$
	has an asymptotic
	order of $\sqrt{n}$.
	The second
	factor
	tends to
	$G(\bar{v})-
	G(\underline{v})>0$,
	which is positive
	since
	$G$
	has full support
	on $[\underline{w},
	\bar{w}]$.
	Therefore the left-hand
	side tends to $\infty$.
\end{proof}

\begin{claima}
	\textit{As $n\to\infty$, uniformly in $w_1\in[\underline{w},\bar{w}]$,}
	\begin{equation}
	2\int_0^1
	\theta_1
	\sqrt{2\pi n}
	\mathbb{P}\{\Omega_n(\theta_1,w_1)\}dF(\theta_1)
	\to
	\frac{2w_1\mathbb{E}[\Theta\phi(\Theta,w_1)]}{
		\sqrt{
			\int_{\underline{w}}^{\bar{w}}
			w^2\mathbb{E}[
			\phi(\Theta,w)^2]
			dG(w)
		}
	}.
	\label{eq:omega}
	\end{equation}
	By part (i) of Lemma \ref{lem:pareto},\footnote{It is
		easy to check that part (i) of Lemma \ref{lem:pareto}
		holds for rules $\phi(\cdot,w_i)$ that depend
		on weight $w_i$ as well.}
	the left-hand
	side of (\ref{eq:omega})
	is $\sqrt{2\pi n}
	\pi_i(\phi;n)$,
	and therefore
	Lemma \ref{lem:cdm} holds.
	\label{claim:lem_cdm}
\end{claima}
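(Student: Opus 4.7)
My approach is to apply the Local Limit Theorem from Section A4 to the tail sum $\tilde S_n := \sum_{i=2}^n X_i$, take the pointwise limit of the probability, and then pass the limit under the integral by dominated convergence. Claims 5.1--5.3 verify conditions $(\alpha)$--$(\gamma)$ together with $s_n^2 \to \infty$ and $s_n^2/n \to V := \int_{\underline w}^{\bar w} w^2 \mathbb{E}[\phi(\Theta,w)^2]\, dG(w)$ for the sequence $(X_i)_{i \geq 1}$. These properties are immediately inherited by $(X_i)_{i \geq 2}$ because removing the single bounded summand $X_1$ leaves the partial-variance sum $\tilde s_n^2 = s_n^2 - \sigma_1^2$ satisfying $\tilde s_n^2 \to \infty$ and $\tilde s_n^2 / n \to V$.

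For fixed $\theta_1 \in [0,1]$ and $w_1 \in [\underline w, \bar w]$, monotonicity and oddness of $\phi(\cdot, w_1)$ (Assumption \ref{as:neutral}) imply $\phi(\theta_1, w_1) \geq 0$, so $\Omega_n(\theta_1, w_1)$ is exactly the event $\{\tilde S_n \in (a, b]\}$ with $a = -w_1\phi(\theta_1, w_1)$, $b = w_1\phi(\theta_1, w_1)$, and $b - a = 2 w_1 \phi(\theta_1, w_1) \in [0, 2\bar w]$. The LLT therefore yields
\begin{equation*}
\sqrt{2\pi \tilde s_n^2}\, \mathbb{P}\{\Omega_n(\theta_1, w_1)\} \longrightarrow 2 w_1 \phi(\theta_1, w_1),
\end{equation*}
and multiplying by $\sqrt{n/\tilde s_n^2} \to 1/\sqrt V$ gives the pointwise limit
\begin{equation*}
\sqrt{2\pi n}\, \mathbb{P}\{\Omega_n(\theta_1, w_1)\} \longrightarrow \frac{2 w_1 \phi(\theta_1, w_1)}{\sqrt V}.
\end{equation*}

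To justify passage to the limit inside the integral, I use dominated convergence. Applying the LLT to a fixed compact interval containing all endpoints of interest produces a constant $M$ such that $\sqrt n\, \mathbb{P}\{\Omega_n(\theta_1, w_1)\} \leq M$ for every $\theta_1 \in [0,1]$, $w_1 \in [\underline w, \bar w]$, and every sufficiently large $n$, so the integrand $\theta_1 \sqrt{2\pi n}\, \mathbb{P}\{\Omega_n\}$ is uniformly bounded on $[0,1]$. Consequently,
\begin{equation*}
2\int_0^1 \theta_1 \sqrt{2\pi n}\, \mathbb{P}\{\Omega_n(\theta_1, w_1)\}\, dF(\theta_1) \longrightarrow \frac{4 w_1}{\sqrt V}\int_0^1 \theta_1 \phi(\theta_1, w_1)\, dF(\theta_1).
\end{equation*}
Symmetry of $F$ and oddness of $\phi(\cdot, w_1)$ make $\theta\phi(\theta, w_1)$ even, so $2\int_0^1 \theta\phi(\theta, w_1)\, dF(\theta) = \mathbb{E}[\Theta\phi(\Theta, w_1)]$, and the limit collapses to the announced expression $2 w_1 \mathbb{E}[\Theta\phi(\Theta, w_1)]/\sqrt V$.

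The main obstacle is upgrading everything above to hold uniformly in $w_1 \in [\underline w, \bar w]$. For this I appeal to the uniform version of Mineka's Theorem 1 (acknowledged in the footnote immediately preceding the LLT statement in Section A4): the convergence $\sqrt{2\pi \tilde s_n^2}\, \mathbb{P}\{\tilde S_n \in (a,b]\} \to b - a$ holds uniformly in $(a, b)$ over any compact subset. Because the endpoints $\pm w_1 \phi(\theta_1, w_1)$ all lie in $[-\bar w, \bar w]$, this yields uniform convergence of the integrand in $(\theta_1, w_1)$; the uniform dominator then lifts this to uniform convergence of the integral in $w_1$, completing the proof.
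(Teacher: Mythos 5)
Your proof is correct and follows the paper's outline for the pointwise limit (verify the LLT hypotheses via Claims 5.1--5.3, apply the LLT, renormalize by $s_n^2/n\to V$, integrate by bounded convergence, and use symmetry of $F$ with oddness of $\phi(\cdot,w_1)$ to fold the half-line integral into $\mathbb{E}[\Theta\phi(\Theta,w_1)]$); your explicit handling of the tail sum $\sum_{i=2}^n X_i$ versus $S_n$ is a small point the paper glosses over. Where you genuinely diverge is in the two boundedness/uniformity steps. For the dominating function, the paper bounds $\sqrt{2\pi n}\,\theta_1\mathbb{P}\{\Omega_n(\theta_1,w_1)\}$ by the $\theta_1=1$ case using monotonicity of $\phi$ in $\theta_1$, whereas you enclose all the events in the fixed interval $(-\bar{w},\bar{w}]$; both work, and yours is marginally cleaner since it is uniform in $w_1$ as well. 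For uniformity in $w_1$, the paper deliberately avoids the uniform version of Mineka's theorem (the footnote to the LLT explicitly says the uniform-convergence part is ``not necessary for our purpose, so we omit them'') and instead argues that the integral is non-decreasing in $w_1$ for each $n$ (because $\Omega_n(\theta_1,w_1)$ weakly expands with $w_1$) while the limit is continuous in $w_1$, so pointwise convergence of monotone functions to a continuous limit on a compact interval is automatically uniform (citing Buchanan--Hildebrandt). You instead invoke the uniform-in-$(a,b)$ form of Mineka's Theorem 1 directly. That is mathematically sound if you go back to the original source, but it rests on precisely the part of the theorem that the paper's Appendix chose not to state; the paper's monotonicity argument buys self-containedness within its quoted LLT, while yours buys a shorter and more transparent uniformity step at the cost of importing the stronger form of the limit theorem.
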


\begin{proof}[Proof of Claim \ref{claim:lem_cdm}]
	By Claims \ref{claim:alpha_gamma} and \ref{claim:beta}, we may apply the LLT 
	to obtain
	\begin{equation*}
	\sqrt{2\pi s_n^2}
	\mathbb{P}\{\Omega_n(\theta_1,w_1)\}
	\to
	2w_1\phi(\theta_1,w_1).
	\end{equation*}
	By Claim \ref{claim:s_n},
	this means that
	\begin{equation}
	\sqrt{2\pi n}\theta_1
	\mathbb{P}\{\Omega_n(\theta_1,w_1)\}
	\to
	\frac{2w_1\theta_1\phi(\theta_1,w_1)}{
		\sqrt{
			\int_{\underline{w}}^{\bar{w}}
			w^2\mathbb{E}[
			\phi(\Theta,w)^2]
			dG(w)
		}
	}.
	\label{eq:apply_llt}
	\end{equation}
	Letting $\theta_1=1$
	maximizes the left-hand side
	of (\ref{eq:apply_llt})
	with the maximum value $\sqrt{2\pi n}\mathbb{P}\{\Omega_n(1,w_1)\}$.
	This maximum value
	itself converges to a finite limit.
	Hence the expression
	$\sqrt{2\pi n}\theta_1
	\mathbb{P}\{\Omega_n(\theta_1,w_1)\}$
	is uniformly bounded for all
	$n$
	and $\theta_1\in[0,1]$.
	By the Bounded Convergence Theorem,
	\[
	2\int_0^1
	\theta_1 \sqrt{2\pi n}\mathbb{P}\{\Omega_n(\theta_1,w_1)\}dF(\theta_1)
	\to
	2\cdot
	\frac{2w_1\int_0^1\theta_1\phi(\theta_1,w_1)dF(\theta_1)}{\sqrt{
			\int_{\underline{w}}^{\bar{w}}
			w^2\mathbb{E}[
			\phi(\Theta,w)^2]
			dG(w)}}
	.
	\]
	Since $F$
	is symmetric
	and
	$\phi$
	is odd,
	this limit
	is exactly the one
	in (\ref{eq:omega}).

	To check the uniform convergence, note that
	for each $n$,
	the integral on the left-hand side
	of (\ref{eq:omega})
	is non-decreasing in $w_1$,
	since event $\Omega_n(\theta_1,w_1)$
	weakly
	expands as $w_1$ increases.\footnote{Let $\theta_1\in[0,1]$.
		If $\phi$ is a symmetric
		profile, {i.e.}, if $\phi(\theta_1,w_1)=\phi(\theta_1)$,
		then $w_1\phi(\theta_1)$ is non-decreasing in $w_1$.
		If $\phi=\phi^{\rm CD}$,
		then
		$w_1\phi^{\rm CD}(\theta_1,w_1)=
		c\,\text{sgn}(\theta_1) +(w_1-c)\theta_1$,
		which is non-decreasing in $w_1$ again.
		Thus event $\Omega_n(\theta_1,w_1)$ weakly expands as $w_1$ increases.}
	We have shown that
	this integral
	converges pointwise to
	a limit that is proportional
	to the factor $w_1\mathbb{E}[\Theta\phi(\Theta,w_1)]$,
	which is continuous in $w_1$.\footnote{If $\phi$ is a symmetric profile,
		this factor is linear in $w_i$.
		If $\phi=\phi^{\rm CD}$,
		the factor equals $c\mathbb{E}(|\Theta|)+
		(w_i-c)\mathbb{E}(\Theta^2)$,
		which is affine in $w_i$.}
	Therefore, the convergence in (\ref{eq:omega}) is
	uniform in $w_1\in[\underline{w},\bar{w}]$.\footnote{It is known
		that if $(f_n)$ is a sequence of non-decreasing functions
		on a fixed finite interval and $f_n$ converges pointwise to a
		continuous function, 
		then the convergence is uniform. See \citet{Buchanan1908}.}
\end{proof}

\subsection{Proof of Part (ii) of Lemma \ref{lem:pareto}}

This follows immediately from Lemma \ref{lem:cdm},
by noting that if $\phi$ is a symmetric profile, each group's rule
can be written as $\phi(\theta_j,w_j)=\phi(\theta_j)$.\qed

\subsection{Proof of Proposition 2}
\hspace{1cm}\linebreak
By part (ii) of Lemma \ref{lem:pareto},
we must show that
$\mathrm{Corr\,}[\Theta,\phi^a(\Theta)]$
is decreasing in $a\in[0,1]$.
By simple calculation,
\[
\mathbb{E}(\Theta^2)\cdot \mathrm{Corr\,}[\Theta,\phi^a(\Theta)]^2=
\frac{a\mathbb{E}(|\Theta|)+(1-a)\mathbb{E}(\Theta^2)}{
	a^2+2a(1-a)\mathbb{E}(|\Theta|)+(1-a)^2
	\mathbb{E}(\Theta^2)}.
\]
The derivative
of this expression
with respect to $a$
has the same sign as
\begin{equation*}
\begin{split}
&\Big\{\textstyle\frac{d}{da}
(a\mathbb{E}(|\Theta|)+(1-a)\mathbb{E}(\Theta^2))^2\Big\}\Big(a^2+2a(1-a)\mathbb{E}(|\Theta|)+
(1-a)^2\mathbb{E}(\Theta^2)\Big)\\
&-\Big(a\mathbb{E}(|\Theta|)+(1-a)\mathbb{E}(\Theta^2)\Big)^2
\Big\{\textstyle\frac{d}{da}
(a^2+2a(1-a)\mathbb{E}(|\Theta|)+(1-a)^2\mathbb{E}(\Theta^2))\Big\}\\
&=a(a\mathbb{E}(|\Theta|)+(1-a)\mathbb{E}(\Theta^2))(\mathbb{E}(|\Theta|)^2-\mathbb{E}(\Theta^2)).
\end{split}
\end{equation*}
This is negative for any $a\in(0,1]$, since
$\mathbb{E}(|\Theta|)^2\leq
\mathbb{E}(\Theta^2)$
in general,
and the full-support assumption
implies that this holds with
strict inequality.\qed

\subsection{Proof of Theorem
	\ref{thm:ineq}}

Clearly,
Lorenz dominance
is invariant under linear
transformations
of payoffs.
Thus,
it suffices to prove that
for large enough $n$,
the payoff profile
$\sqrt{2\pi n}\pi(\phi^{\rm CD};n)$
Lorenz dominates the payoff profile
$\sqrt{2\pi n}\pi(\phi;n)$.
By
equations
(\ref{eq:aw_phi})
and (\ref{eq:bw+c})
in the proof
of Theorem
\ref{thm:cdm}, as $n\to\infty$
these amounts converge to
$Bw_i+C$
and $A^{\phi} w_i$,
respectively.
A result by \citet[Proposition 2.3]{Moyes1994}
implies that
if $f$ and $g$
are continuous,
nondecreasing,
and positive-valued functions
such that 
$f(w_i)/g(w_i)$
is decreasing in $w_i$,
then
the distribution
of $f(w_i)$
Lorenz dominates
that of $g(w_i)$.
The ratio
$(Bw_i+C)/(A^{\phi}w_i)$
is decreasing in $w_i$,
and so the claimed Lorenz dominance
holds in the limit as $n\to\infty$.
Recalling that the convergences
are uniform,
the dominance holds for sufficiently large $n$.\qed



\phantomsection
\addcontentsline{toc}{section}{References}

\bibliography{bibbib}


\end{document}